\documentclass[aps,pra,10pt,twocolumn,superscriptaddress,amssymb,amsmath,longbibliography, showpacs,balancelastpage, nofootinbib]{revtex4-1}
\usepackage{CJKutf8}

\usepackage{amsthm}
\newtheorem{theorem}{Theorem}
\newtheorem{corollary}[theorem]{Corollary}
\newtheorem{lemma}[theorem]{Lemma}

\usepackage[utf8]{inputenc}
\usepackage{amsmath,amssymb,amsfonts,stmaryrd}
\usepackage{physics}
\usepackage{dsfont}
\usepackage{graphicx}
\usepackage{xcolor}
\usepackage{graphicx}
\usepackage{cancel}
\usepackage{natbib}
\usepackage{color}
\usepackage{tikz}
\usepackage{mathrsfs}
\usepackage{relsize}
\usepackage{multirow}
\usepackage[linktocpage]{hyperref}
\usepackage[all]{xy}
\usepackage[normalem]{ulem}
\hypersetup{colorlinks=true,citecolor=blue,linkcolor=blue, urlcolor=blue, breaklinks=true}

\usepackage{float}

\usepackage{bm} 
\usepackage{subfigure} 
\usepackage{environ}
\usepackage{url}
\usepackage[margin=0.75in]{geometry}
\usepackage{tikz}
\usepackage{tikz-cd}
\usepackage{adjustbox}

\usepackage{mathtools}

\newcommand{\Rep}{\mathbf{Rep}}

\newcommand{\Sq}{{\mathrm{Sq}}}

\newcommand{\Z}{{\mathbb Z}}

\def\cZ{\mathcal Z}
\def\cC{\mathcal C}

\newcommand{\fB}{{\mathfrak B}}
\newcommand{\fC}{{\mathfrak C}}
\newcommand{\fD}{{\mathfrak D}}
\DeclareMathOperator{\End}{End}
\newcommand{\tVect}{\mathbf{2Vect}}
\newcommand{\SVect}{\mathbf{SVect}}

\newcommand{\tSVect}{\mathbf{2SVect}}
\newcommand{\tRep}{\mathbf{2Rep}}
\newcommand{\Mod}{\mathbf{Mod}}
\newcommand{\sPic}{\mathscr{P}ic}
\newcommand{\sBrPic}{\mathscr{B}r\mathscr{P}ic}

\newcommand{\sAut}{\mathscr{A}ut}
\newcommand{\rH}{\mathrm{H}}
\newcommand{\SH}{\mathrm{SH}}

\newcommand{\br}{\mathrm{br}}

\newtheorem*{claim*}{Claim}

\theoremstyle{definition}
\newtheorem{defn}{Definition}
\newtheorem{ansatz}[defn]{Ansatz}
\newtheorem{prop}[theorem]{Proposition}
\theoremstyle{definition}
\newtheorem*{notation*}{Notation}
\newtheorem*{trick*}{Trick}

\newtheorem*{convention*}{Convention}

\newcounter{mainthm}

\newtheorem{ex}[defn]{Example}

\def\Sq{\mathop{\mathrm{Sq}}\nolimits}

\NewEnviron{eqs}{%
\begin{equation}\begin{split}
    \BODY
\end{split}\end{equation}}

\definecolor{dkgreen}{rgb}{0,0.5,0}

\theoremstyle{definition}

\theoremstyle{remark}
\newtheorem*{remark}{Remark}

\makeatletter
\def\l@subsection#1#2{}    
\def\l@subsubsection#1#2{} 
\makeatother

\makeatletter
\def\@fnsymbol#1{\ensuremath{\ifcase#1\or \bullet\or \downarrow\or \Downarrow\else\@ctrerr\fi}}
\makeatother

\begin{document}

\begin{CJK*}{UTF8}{bsmi}

\title{Realizing Braided Fusion 2-Categories as (3+1)D Mixed-State Topological Orders}

\author{Ryohei Kobayashi}
\email[E-mail: ]{ryohei.k@ap.t.u-tokyo.ac.jp}
\affiliation{Department of Applied Physics, University of Tokyo, Bunkyo-ku, Tokyo 113-8656, Japan}

\author{Abhinav Prem}
\email[E-mail: ]{aprem@bard.edu}
\affiliation{Physics Program, Bard College, 30 Campus Road, Annandale-on-Hudson, New York 12504, USA}

\author{Matthew Yu}
\email[E-mail: ]{yumatthew70@gmail.com}
\affiliation{School of Natural Sciences, Institute for Advanced Study, 1 Einstein Drive, Princeton, NJ 08540, USA}

\date{\today}
\begin{abstract}
We construct large families of mixed-state topological orders in (3+1)D, including intrinsically mixed-state topological orders. We realize these theories by relating the data parameterizing braided fusion 2-categories to the strong higher-form symmetries and their anomalies that characterize mixed-state topological order. We focus here on bosonic theories, which include theories which admit an emergent fermion in the spectrum of line operators. Our general framework is exemplified on the lattice, where we present concrete models and identify how the data of braided fusion 2-categories are physically encoded. We then formalize this data and show that every braided fusion 2-category arises from a general construction (decoherence followed by strong and weak gauging) as the category of topological operators of a (3+1)D mixed state, with intrinsically mixed-state topological orders distinguished by the presence of strong higher-form symmetries whose anomalies cannot be realized in any (3+1)D gapped ground state. Notably, we show that (3+1)D admits a new mechanism with no (2+1)D counterpart: intrinsically mixed-state order whose topological operators braid non-degenerately, but which carries a gravitational anomaly.
\end{abstract}

\maketitle

\end{CJK*}

\tableofcontents



\section{Introduction}
\label{sec:intro}

Topological states of matter provide canonical resources for fault-tolerant
quantum information storage and processing. Their recent preparation on programmable quantum simulators~\cite{satzinger2021,semeghini2021,iqbal2023,xu2023,andersen2023,fossfeig2023,fib1} has rendered the fate of topological pure states under noise a question of immediate practical importance. It is becoming increasingly clear, however, that decoherence does not merely degrade patterns of entanglement present in pure initial states: rather, the resulting density matrices can host new forms of order without pure-state counterpart. Indeed, the distinction between ``strong" and ``weak" symmetries for mixed states~\cite{buca2012,albert2014sym} opens the door to strong-to-weak spontaneous symmetry breaking (SWSSB)~\cite{lee2023,sala2024,lessa2025sw,kuno2025sw,zhang2025} and intrinsically average symmetry protected topological (iASPT) mixed-state phases~\cite{deGroot2022og,lee2023aspt,zhang2022strange,ma2023prx,ma2024,xue2024,guo2024,Ma2025,qi2025taco}.

From the perspective of quantum error correction, a fundamental question is understanding how logical information stored in \textit{topologically ordered} pure states gets corrupted by noise. For the two dimensional toric code, for instance, the encoded information remains recoverable for noise below a certain threshold, beyond which it is no longer accessible, signaling a phase transition in the decodable information encoded in the decohered mixed state~\cite{dennis2002,bao2023mixed}. Such transitions mark the destruction of the parent topological order; more surprisingly, decoherence can also generate new forms of topological order. Indeed, recent work has shown that subjecting (2+1)D topologically ordered states to local noise can produce \textit{intrinsically mixed-state topological order} (imTO), which is a form of topological order that cannot arise in the ground states of any 2D gapped local Hamiltonian~\cite{sohal2024imto,ellison2024,Wang2025intrinsic}. Such mixed states are characterized in terms of their strong one-form symmetries and their corresponding self/mixed anomalies, and admit a partial classification in terms of non-modular braided fusion categories~\cite{sohal2024imto,ellison2024,Hsin:2025pqb}.

Here, the anomalies of strong higher-form symmetries\footnote{A mixed state satisfying $U_g \rho U_g^\dagger = \rho \, \forall g \in G$ is weakly symmetric under $G$ while one that satisfies $U_g \rho = e^{i \theta_g} \rho$ for $g \in G$ is said to be strongly symmetric. The former describes a stochastic ensemble of states where the state is symmetric on average but individual states need not be symmetric, whereas the latter describes the case where each state within the ensemble is independently symmetric under $G$.} play a central role: they underpin the quantum information retained by a mixed-state topological order, and they guarantee its stability as a genuine mixed-state phase of matter (see also \cite{Feng2026higher} for recent discussions on lattice characterization of higher-form anomalies). Following Ref.~\cite{sang2023mixed}, we regard two mixed states as belonging to the same phase when they are related by (distinct) finite-depth local quantum channels acting in \emph{both} directions. Since a state invariant under anomalous strong higher-form symmetries cannot be generated from a trivial state via a finite-depth local quantum channel, these symmetries and their anomalies furnish robust data characterizing mixed-state phases~\cite{li2024anyon,Hsin:2025pqb,lessa2025higher}. In fact, anomalous strong higher-form symmetries enforce intrinsic long-range entanglement: any state invariant under such anomalous symmetries has exponentially small overlap with every short-range entangled state~\cite{Hsin:2025pqb}.

While mixed-state topological order remains far from well-understood in general, one systematic approach is to subject pure-state topological orders (arising in ground states of gapped local Hamiltonians) to decoherence channels and analyze the symmetries and their anomalies of the resulting mixed states. We can then distinguish between two kinds of mixed-state topological orders. In the first case, the mixed state is only a ``weakly" deformed version of the parent pure state and the strong symmetries and their anomalies remain intact. The mixed state is then two-way channel connected to the parent pure state and, in the current taxonomy, is referred to as simply carrying mixed-state topological order (mTO). In contrast, the mixed state could result from subjecting a gapped ground state to strong decoherence such that the resulting mixed state cannot be two-way channel connected to the parent pure state. For instance, some of the initial strong symmetries can be broken to weak symmetries, with the resulting algebra of remaining strong symmetry anomalies impossible in any gapped ground state. Mixed states carrying such strong higher-form symmetry anomalies are said to realize imTO. This distinction can be re-stated physically in terms of \emph{remote detectability}: incoherently proliferating an excitation can leave behind operators which braid trivially with all others. When such ``transparent" operators are themselves anomalous (e.g., carry fermionic self-statistics), they obstruct a pure-state realization outright, since the anomaly prevents two-way channel connectivity to a trivial state and since remote detectability (or braiding nondegeneracy) is expected to hold in the ground state of any gapped local Hamiltonian in any dimension. In (2+1)D, this breakdown is precisely the failure of the anyon theory to be modular~\cite{sohal2024imto,ellison2024}. Neither our definition of a mixed-state phase nor the principle whose violation defines imTO makes reference to spatial dimension, however, and both accordingly carry over unchanged to higher dimensions.

In this paper, we propose a general framework for characterizing mixed-state topological order in (3+1)D, and use it to construct imTOs whose excitations are either all bosons or contain an emergent fermion in the spectrum of line operators. Our starting point is the ansatz that the
topological operators of a (3+1)D mTO form a braided fusion 2-category which need not be non-degenerate. This allows for the presence of transparent operators i.e., operators with trivial mutual braiding with every operator in the theory. Such an operator need not be trivial on its own: it can still carry nontrivial self-statistics, as with a fermionic particle or a fermionic loop. Such anomalous transparent operators violate the principle of remote detectability, prevent two-way channel connectivity to a pure-state TO, and enforce long-range entanglement. This ansatz converts a question about density matrices into one about categorical data: mathematically, mTO in (3+1)D (two-way channel connected to a pure state in our taxonomy) corresponds to a non-degenerate braided fusion 2-category $\mathcal{D}$, where a faithful symmetry grading can be forgetfully removed to recover pure-state topological data. imTO in (3+1)D arises in two distinct ways. Generalizing the (2+1)D construction, it is represented by a \emph{degenerate} braided fusion 2-category $\mathfrak{B}$ whose transparent operators generate anomalous strong symmetries; these obstruct two-way channel connectivity to pure-state topological order and enforce intrinsic long-range entanglement~\cite{li2024anyon,Hsin:2025pqb,lessa2025higher}. Alternatively, $\mathfrak{B}$ may be non-degenerate but carry a gravitational anomaly that no strictly (3+1)D ground state can realize; this provides a novel route to imTO that has no counterpart in (2+1)D. Note that degeneracy alone is insufficient: transparent operators that are anomaly-free bosons arise from essentially classical ensembles, which are two-way connected to trivial states.

In (3+1)D, topological excitations are both point-like and loop-like, such that topological operators organize into a braided fusion 2-category rather than a braided fusion category, opening more mechanisms for generating imTO via decoherence than are available in (2+1)D. For instance, the decohered object in (3+1)D can be a loop and, since loops in a purely bosonic theory can carry nontrivial self-statistics (and are thus dubbed fermionic loops)~\cite{Thorngren:2014pza, FHH21, Chen2023gravitational, Kobayashi2026statistics}, decohering the bosonic $\mathbb{Z}_2$ toric code already yields a nontrivial imTO~\cite{Hsin:2025pqb}. Most strikingly, we construct an imTO of a distinct kind: it hosts a fermionic particle and a fermionic loop with nontrivial mutual braiding, so that every operator is detected by braiding and remote detectability holds\footnote{In (3+1)D, line operators cannot braid with one another, so they always form a symmetric fusion category which may contain an emergent fermion; here, non-degeneracy instead concerns how lines are detected by surfaces. In (2+1)D a non-degenerate braided fusion category is never symmetric unless trivial.}. It realizes the (3+1)D $\mathbb{Z}_2$ gauge theory with a global gravitational anomaly~\cite{FHH21, Chen2023gravitational}, which is believed not to admit a pure-state realization in strictly three spatial dimensions without coupling to a four-dimensional bulk. This theory is thus intrinsically mixed not because of transparent operators, but because of its gravitational anomaly. Note that since fusion 2-categories with no nontrivial line operators are necessarily grouplike~\cite{JFY}, every theory we encounter can be obtained by gauging symmetries of a parent gauge theory and can be labelled by group-cohomological data. This is in contrast to (2+1)D, where modular tensor categories include non-group theoretic examples such as the Fibonacci anyon theory.

We exemplify our construction on the lattice, starting with a $\mathbb{Z}_N$ toric code and applying an error channel that incoherently proliferates loops carrying an 't~Hooft anomaly, resulting in a degenerate theory with transparent operators. We then gauge the 0-form symmetry $G$ of the degenerate theory, such that the non-anomalous part $N\subset G$ of the gauged 0-form symmetry acts strongly while the quotient $H=G/N$ acts weakly. Note that even if the left $H$ action on the density matrix is anomalous, one can still gauge $H$ as a weak symmetry; gauging a weak symmetry is also called ``classical gauging," which amounts to taking a classical ensemble of gauge field configurations~\cite{ellison2024}. Categorically, we show that the data obtained by this procedure is exactly the data parametrizing braided fusion 2-categories as presented in Theorem \ref{thm:mainA}: a finite abelian group $E$ together with a class $\beta$ specifying the proliferated loops, a faithfully graded crossed extension encoding the strong symmetry $N$, and a group $H=G/N$ acting by braided autoequivalences encoding the weak one.
In this language, pure-state (3+1)D TOs arise from \emph{faithfully} graded $G$-crossed braided fusion 2-categories, for which equivariantization reinstates the line operators and preserves nondegeneracy; mixed-state TOs unlock symmetry structures with non-faithful grading, enabled by gauging weak symmetries with anomalous left/right action.

We stress that this does not amount to a classification of (3+1)D mixed-state phases under the two-way channel equivalence relation (or the finer classification under the finite-Markov length condition~\cite{sang2024}). Our results rest on the Ansatz~\ref{ansatz:mTO}, which we motivate but do not prove; we restrict throughout to \emph{bosonic} theories (but allow for emergent fermions); and, we do not establish that distinct categorical data necessarily label states inequivalent under two-way channels, nor do we capture chiral mixed-state phases or multi-loop braidings. Our main result is to establish that \textit{every} braided fusion 2-category in the two families above is realized via a single construction -- incoherently proliferating possibly anomalous excitations of an Abelian gauge theory, followed by strong and weak 0-form gauging -- providing a physical realization for this categorical data. We implement this construction explicitly on the lattice for representative examples, including a new imTO with no (2+1)D counterpart.

This paper is organized as follows: In Sec.~\ref{sec:latmod}, we introduce (3+1)D imTOs from the point of view of lattice models. In particular, we cover the constructions of some bosonic theories that contain only bosonic point-like excitations as well as emergent fermions. In Sec.~\ref{section:generalconstruction}, we abstract from these examples a general construction. In Sec.~\ref{section:allbraided}, we study (3+1)D mTOs as braided fusion 2-categories, obtain the data parametrizing all such theories, and show that it coincides with the data produced by the construction of Sec.~\ref{section:generalconstruction}. We collect some relevant categorical facts about fusion 2-categories in Appendix~\ref{appendix:F2C}.

\section{Lattice models}
\label{sec:latmod}

We begin by example, describing various lattice constructions for (i)mTO in (3+1)D. The examples here serve as the starting point from which we will develop the general prescription for encoding every braided fusion 2-category as the (i)mTO of some (3+1)D mixed state. The key new example contained in this Section is an imTO that hosts an emergent fermion and an emergent fermionic loop, following which we present the general construction in Sec.~\ref{section:generalconstruction}.


\subsection{Intrinsically mixed-state topological order with a single fermion}
\label{subsec:singlefermion}

\begin{figure}[b]
\centering
\includegraphics[width=0.45\textwidth]{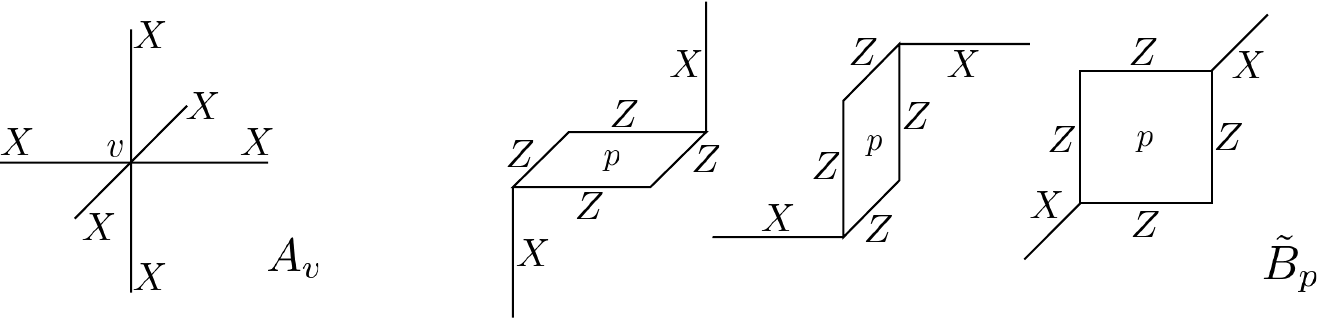}
\caption{Vertex and plaquette terms for the Hamiltonian defining the (3+1)D fermionic $\Z_2$ toric code.}
\label{fig:fTC}
\end{figure}

Our first example is an imTO that can be obtained by subjecting the (3+1)D $\Z_2$ \textit{fermionic} toric code~\cite{levinwen2003,vk2015ww,chen2019bos} to a local noise channel. We consider a 3d cubic lattice with a single qubit on each edge. We start with the following Hamiltonian for the (3+1)D fermionic $\Z_2$ toric code
\begin{align}
    H_{\text{fTC}} = -\sum_v A_v - \sum_p \tilde B_p ~,
    \label{eq:fTC}
\end{align}
where the vertex terms $A_v$ and the plaquette terms $\tilde B_p$ are shown in Fig.~\ref{fig:fTC}. This model realizes the deconfined phase of (3+1)D $\mathbb{Z}_2$ gauge theory that contains fermionic charge and bosonic loop excitations, and is hence a bosonic theory with an emergent fermion in the spectrum of topological operators.

This model has an anomalous $\Z_2$ 2-form symmetry generated by a line operator
\begin{align}
    L_{\text{f}}(\gamma) = \prod_{e\subset\gamma} L_{e}~,
\end{align}
where $\gamma$ is a closed loop of the cubic lattice, and the operator $L_e$ is shown in Fig.~\ref{fig:Le}. This strong $\Z_2$ symmetry has an 't Hooft anomaly characterized by a (4+1)D response
\begin{equation}
\label{eq:resp1}
\pi\int Sq^2 B=    \pi\int w_2\cup B~,
\end{equation}
where $B$ is a 3-form $\Z_2$ background. This 't Hooft anomaly corresponds to the fermionic statistics of the topological line operator that generates the symmetry.

We now consider an error channel which preserves the above anomalous $\Z_2$ symmetry, and leads to the incoherent proliferation of fermionic point-like particles:
\begin{align}
    \mathcal{N} = \prod_e\mathcal{N}_e, \quad \mathcal{N}_e(\rho) = p\rho + (1-p) \tilde L_e \rho \tilde L_e^\dagger~,
\end{align}
where the operators $\tilde L_e$ are shown in Fig.~\ref{fig:Le}. One can check that $L_e$ and $\tilde L_{e'}$ are commutative for all edges $e,e'$; therefore, the state obtained via this decoherence channel preserves the strong $\Z_2$ 2-form symmetry generated by $L_{\text{f}}(\gamma)$. 

\begin{figure}[b]
\centering
\includegraphics[width=0.4\textwidth]{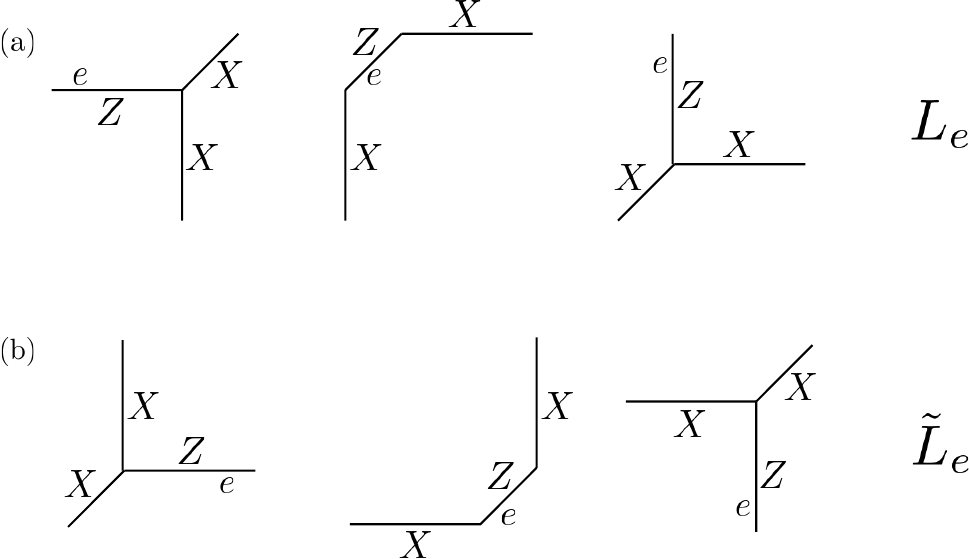}
\caption{Definitions of the operators $L_e, \tilde L_e$.}
\label{fig:Le}
\end{figure}

Here, the choice of $\tilde{L}_e$, rather than $L_e$, as the Kraus operator is essential. While both operators create a pair of fermions at the endpoints of $e$, they correspond to different framings of the fermion string. The $L_e$ do not mutually commute, reflecting the fermionic exchange statistics of the particles they transport, and a closed string $L_{\mathrm{f}}(\gamma)$ does not commute with $L_e$ on edges adjacent to $\gamma$; a channel built from $L_e$ would therefore not preserve $L_{\mathrm{f}}(\gamma)$ as a strong symmetry. In contrast, $\tilde{L}_e$ commutes with $L_{e'}$ for all $e,e'$, so that $\mathcal{N}$ commutes with every closed fermion string (and with every open string segment). Hence, for all $p$,
\begin{equation}
    L_{\mathrm{f}}(\gamma)\, \mathcal{N}\rho_0 = \lambda_\gamma\, \mathcal{N}\rho_0 ,
\end{equation}
where $\lambda_\gamma = \pm 1$ is the eigenvalue of $L_{\mathrm{f}}(\gamma)$ in the parent ground state $\rho_0 = |\mathrm{fTC}\rangle\langle \mathrm{fTC}|$.

It is convenient to represent the density matrix $\rho =\sum_{j,k}\rho_{j,k}\ket{j}\bra{k}$ as a pure state in a doubled Hilbert space~\cite{choi1975,jamio1972} in terms of the Choi state $\ket{\rho}\rangle = \sum_{j,k}\ket{j}\ket{k}^* \in \mathcal{H}_+\otimes \mathcal{H}_-$, where $\ket{k}^*:= K\ket{k}$ ($K$ denotes complex conjugation) and $\mathcal{H}_+ (\mathcal{H}_-)$ refer to the ket (bra) Hilbert spaces respectively. At the maximally decohered point $p=1/2$, the Choi state of the decohered density matrix is
\begin{equation}
    |\mathcal{N}\rho_0\rangle\!\rangle
    = \prod_e \left( \frac{1 + \tilde{L}_e^{+} (\tilde{L}_e^{-})^{*}}{2} \right)
      |\mathrm{fTC}\rangle_{+} |\mathrm{fTC}\rangle_{-} ,
\end{equation}
which is a zero correlation length state. Since $\mathcal{N}$ is a Pauli channel and $\rho_0$ is a stabilizer state, the $p=1/2$ channel simply twirls $\rho_0$ over the group generated by $\{\tilde{L}_e\}$: every element of $\langle A_v, \tilde{B}_p \rangle$ that anticommutes with some $\tilde{L}_e$ is projected out, while those that commute with all $\tilde{L}_e$ survive. Because $\tilde{L}_e$ anticommutes with $A_v$ at both endpoints of $e$, every product of vertex terms is eliminated. The surviving stabilizers are generated by dressed plaquette operators $\hat{B}_p \equiv \tilde{B}_p A_{v(p)}$, where $v(p)$ is a corner of $p$ fixed by the framing of $\tilde{L}_e$\footnote{With the conventions of Figs.~1 and 2, and $p$ the plaquette with lowest corner $r$: $v(p) = r+\hat{x}+\hat{y}$, $r$, and $r+\hat{x}$ for $xy$, $yz$, and $xz$ plaquettes, respectively.}. The maximally decohered state is thus
\begin{equation}
    \mathcal{N}\rho_0 |_{p=1/2}
    \propto \prod_p \frac{1 + \hat{B}_p}{2} ,
\end{equation}
i.e., the maximally mixed state within the subspace free of (dressed) flux loops: an incoherent ensemble over all configurations of fermions on a flux-free background. On a closed manifold with nontrivial topology, this state additionally carries the projectors $(1+\lambda_i L_{\mathrm{f}}(\gamma_i))/2$ onto the eigenvalues of the non-contractible fermion loops $\gamma_i$ in $\rho_0$. The logical information carried by the fermion loops therefore survives as classical information, whereas the conjugate membrane logicals are completely randomized.

The symmetries of the maximally decohered state are:
\begin{enumerate}
    \item \textit{Strong $\mathbb{Z}_2$ 2-form symmetry.} The fermion loops $L_{\mathrm{f}}(\gamma)$ remain strong symmetries, with the 't~Hooft anomaly Eq.~\eqref{eq:resp1} encoding the fermionic statistics. Since open segments of $L_{\mathrm{f}}$ also commute with every Kraus operator, the self statistics of the line (and hence the anomaly) can be evaluated directly on the decohered state~\cite{Kobayashi2026statistics}.
    \item \textit{Weak $\mathbb{Z}_2$ 1-form symmetry.} The closed membrane
    \begin{equation}
        M(\Sigma) = \prod_{e \perp \Sigma} X_e = \prod_{v \in R} A_v , \qquad \Sigma = \partial R ,
    \end{equation}
    measures the $\mathbb{Z}_2$ gauge charge (i.e., the fermion parity) enclosed by $\Sigma$ and, in the parent state $\rho_0$, detects the fermion through braiding with the bosonic flux loop created by an open membrane. $M(\Sigma)$ commutes or anticommutes with each $\tilde{L}_e$, so $M(\Sigma)\,\mathcal{N}\rho_0\, M(\Sigma)^\dagger = \mathcal{N}\rho_0$ for all $p$. However,
    \begin{equation}
        \mathrm{Tr}\big[ M(\Sigma)\, \mathcal{N}\rho_0 \big] = (2p-1)^{|\Sigma|} ,
    \end{equation}
    where $|\Sigma|$ is the number of edges crossing $\Sigma$. This expectation value vanishes identically at $p=1/2$, so $M(\Sigma)$ is at most a weak symmetry at the maximally decohered point.
\end{enumerate}

The strong topological operators of $\mathcal{N} \rho_0$ are therefore generated by the fermion line $L_{\mathrm{f}}$ (together with the surfaces obtained by condensing it). These form $\tSVect$, in which the fermion is transparent: no strong operator remains that braids nontrivially with it. Decoherence has thus reduced the nondegenerate braided fusion 2-category $\mathcal{Z}(\tSVect)$ of the parent fermionic toric code to its maximally degenerate subcategory $\tSVect$. This is the direct (3+1)D analogue of the (2+1)D imTO with a single transparent fermion, $\{1,\psi\} \simeq \mathbf{SVect}$, obtained by decohering the fermion of the (2+1)D toric code~\cite{sohal2024imto,ellison2024,Wang2025intrinsic}.

Let us now argue that the maximally decohered state represents an \emph{intrinsically} mixed-state topological order. First, it is long-range entangled: since it is strongly symmetric under a $\mathbb{Z}_2$ 2-form symmetry with the anomaly Eq.~\ref{eq:resp1}, it cannot be prepared from a product state by any finite-depth local channel and has exponentially small overlap with every short-range entangled state~\cite{Hsin:2025pqb}. Note that the anomaly plays a central role: the bosonic counterpart of this construction, in which the \emph{bosonic} charges of the (3+1)D $\mathbb{Z}_2$ toric code are maximally decohered, produces $\rho \propto \prod_p (1+B_p)$, a classical ensemble of closed $\mathbb{Z}_2$ strings. That state can be prepared from $|0\rangle^{\otimes N}$ by applying each $A_v$ independently with probability $1/2$, and is therefore two-way connected to a trivial state (it is distinguished from a trivial state under a distinct equivalence relation that also accounts for anomalies between strong and weak symmetries~\cite{sang2024,sang2025lr,sohal2024imto,zhang2025}). Thus, the nontriviality of the maximally decohered state in our case hinges on the fermionic statistics of the proliferated particle.

The maximally decohered state cannot be two-way connected to the ground state of any gapped local Hamiltonian in three spatial dimensions. In such a ground state, remote detectability requires every nontrivial line operator to be detected by braiding with some surface operator. Here, however, the anomalous fermion line braids trivially with all strong operators. Categorically, the strong operators form the degenerate category $\mathbf{2SVect}$, whose transparent fermion carries the anomaly~\eqref{eq:resp1}. This is precisely the situation in which we expect two-way channel connectivity to a pure state to fail, and it can be made sharp using the refined notion of phase equivalence of Ref.~\cite{sang2025lr}: locally reversible channels preserve strong and weak higher-form symmetries together with their anomalies. In the fermionic toric code, the fermion line has a mutual anomaly with the \emph{strong} membrane $M(\Sigma)$, whereas here it braids nontrivially only with the \emph{weak} $M(\Sigma)$. Hence, the maximally decohered state realizes an imTO. It is the simplest member of the family with emergent fermions: in the language of Theorem~\ref{thm:mainB} it corresponds to $E$, $N$, and $H$ all trivial. It also serves as the building block for the FcFl theory of Sec.~\ref{subsec:FcFl}, whose layer $A$ is the state constructed here.

Finally, note that $L_{\mathrm{f}}(\gamma)$ is an exact strong symmetry for all $p$, so the anomaly-enforced long-range entanglement persists throughout $0 \le p \le 1$. What distinguishes weak from maximal decoherence is the fate of $M(\Sigma)$. For $p$ sufficiently close to $1$, we expect that the fermions can be recovered by a quasi-local decoder, so that $\mathcal{N}\rho_0$ remains two-way connected to the fermionic toric code (mTO). The imTO phase should then be separated from it by a decoding transition at some finite threshold.


\subsection{Review: Mixed topological order with a single fermionic loop}
\label{subsec:singlefloop}

We illustrate another example of a nontrivial imTO obtained by subjecting (3+1)D $\Z_2$ \textit{bosonic} toric code,  to a local noise channel. This is a review of an example that initially appeared in Ref.~\cite{Hsin:2025pqb}, which achieves a nontrivial imTO starting with a pure state topological order with only bosonic excitations. We stress that in contrast to pure state theories in (2+1)D, where one needs a fermionic quasiparticle to obtain an imTO by decoherence, this is not the case in (3+1)D.

We begin with the (3+1)D $\Z_2$ toric code. We prepare it with $\Z_4$ qudits living on each edge $e$ of the 3d cubic lattice:
\begin{align}
    H_{\text{TC}} = -\sum_e X_e^2 -\sum_v (A_v+A^\dagger_v) - \sum_p B^2_p ~,
    \label{eq:Z2TC_qudit}
\end{align}
which can be interpreted as the $\Z_4$ toric code  with the $m^2$ magnetic fluxes condensed, resulting in the $\Z_2$ toric code. Here, the $A_v$ and $B_p$ terms in the Hamiltonian are the standard vertex and plaquette terms of the (3+1)D $\Z_4$ toric code that involves the operators $X_e, Z_e$ satisfying the $\Z_4$ Pauli algebra $Z_e X_e = e^{2\pi i/4} X_e Z_e$.

This $\Z_2$ toric code hosts an anomalous emergent $\Z_2$ 1-form symmetry, generated by the closed membrane operator
\begin{align}
    S_{\mathrm{f}}(\Sigma) = \prod_{e\subset \Sigma} S_e~,
    \label{eq:floop}
\end{align}
where $\Sigma$ is a surface of the dual lattice, and the product is over edges $e$ crossing the surface $\Sigma$.
This 't Hooft anomaly corresponds to nontrivial statistics of the magnetic flux loop excitation dubbed a fermionic loop.

We then consider an error channel which preserves the above anomalous $\Z_2$ symmetry, and leads to the incoherent proliferation of such fermionic loop excitations:
\begin{align}
    \mathcal{N} = \prod_e\mathcal{N}_e, \quad \mathcal{N}_e(\rho) = p\rho + (1-p) \tilde S_e \rho \tilde S_e^\dagger~,
\end{align}
where $\tilde S_e$ is defined in Figure~\ref{fig:Se} (b). In the case of maximal decoherence $p=1/2$, the Choi state for the decohered state is expressed as
\begin{align}
    \ket{\mathcal{N}\rho_0}\rangle = \prod_e\left( \frac{1 + \tilde S^+_e (\tilde S^{-}_{e})^*}{2} \right) \ket{\text{TC}}_+\ket{\text{TC}}_-~,
\end{align}
where we took the initial state $\rho_0$ to be a pure ground-state of the (3+1)D toric code, $\rho_0 = \ket{\text{TC}}\bra{\text{TC}}$.

The maximally decohered mixed state is invariant under the strong (emergent) anomalous $\Z_2$ 1-form symmetry $S_{\text{f}}(\Sigma)$, in the following sense:
\begin{enumerate}
    \item The closed (oriented) surface operators $S_\text{f}(\Sigma)$ commute both with every $\tilde S_{e}$ and with the toric code Hamiltonian. Therefore, it remains a strong symmetry of the decohered state:  $S_{\text{f}}(\Sigma) \mathcal{N}\rho_0 =\mathcal{N}\rho_0$.
    \item When the surface $\Sigma$ is not closed, $S_{\text{f}}(\Sigma)$ does not preserve $\mathcal{N}\rho_0$ at the boundary. Meanwhile, its square commutes with every $\tilde S_e$ and with $H_\text{TC}$ even when $\Sigma$ has a boundary. Therefore, $(S_{\text{f}}(\Sigma))^2 \mathcal{N}\rho_0 =\mathcal{N}\rho_0$ i.e., this remains a strong symmetry even when $\Sigma$ is not closed. This implies that the defect of the 1-form symmetry follows the $\Z_2$ fusion rule on the state $\mathcal{N}\rho_0$, and generates the emergent $\Z_2$ symmetry. This $\Z_2$ fusion rule of the symmetry defect allows us to define the generalized statistics invariant~\cite{Kobayashi2026statistics} and define the 't Hooft anomaly associated with the emergent $\Z_2$ 1-form symmetry. The invariant carried by $S_{\text{f}}$ turns out to correspond to a fermionic loop statistic~\cite{Hsin:2025pqb}.
    \end{enumerate}

The above strong $\Z_2$ 1-form symmetry has fermionic loop statistics, which corresponds to an 't Hooft anomaly characterized by the (4+1)D response
\begin{equation}
\pi\int B\cup Sq^1 B=    \pi\int w_3\cup B~,
\label{eq:BSq1B}
\end{equation}
using the 2-form $\Z_2$ background gauge field $B$. This 't Hooft anomaly of the strong symmetry enforces the long-range entanglement of the mixed state $\mathcal{N}\rho_0$, and protects a nontrivial mixed-state phase.  The Wilson loops $\prod_{e\in\gamma}Z_e^2$ of the parent toric code, which detected the flux loop by braiding, anticommute with $\tilde S_e$ on edges $e \in \gamma$ and hence survive only as weak symmetries. The strong operators therefore form $\mathbf{2Vect}^{\beta}_{\mathbb{Z}_2}$ with $\beta$ the fermionic-loop class: the fermionic loop is transparent and anomalous, placing this example in the first class of imTOs (cf.\ Sec.~\ref{subsection:onlyboson}, where the $\beta=0$ counterpart is shown to be classical).

\subsection{Mixed topological order with a non-Abelian fermionic loop}
\label{subsec:nonabelianfloop}

Here, we describe yet another (3+1)D imTO with a \textit{non-Abelian} fermionic loop obtained by sequentially applying local noise channels. This theory was previously discussed by one of us in Ref.~\cite{sato2026invariants}. A non-Abelian fermionic loop is a non-invertible defect that exhibits fermionic loop statistics. We note that the non-Abelian fermionic loop excitation is found in $\mathbb{D}_4$ gauge theory (dihedral group of order 8) in (3+1)D~\cite{sato2026invariants}. On a cubic lattice, we start with two copies of bosonic $\Z_2$ toric code ground states $\rho_0^A, \rho_0^B$, together with a product state $\ket{0}\bra{0}^C$ with additional qubits on edges. We then add a local noise channel $\mathcal{N}^A, \mathcal{N}^B$ on each layer of $\Z_2$ toric codes, so that each layer yields an imTO with a single fermionic loop. We then get a mixed state in the form of
\begin{equation}
\rho_1=\mathcal{N}^{A}\rho_0^{A}\otimes \mathcal{N}^{B}\rho_0^{B}\otimes \ket{0}\!\bra{0}^{C},
\end{equation}
A straightforward approach to obtaining a non-Abelian theory is to gauge the $\mathbb{Z}_2$ layer swap symmetry between a pair of Abelian gauge theories (see e.g. Refs~\cite{bulmashgauging,premgauging}). Motivated by this, we thus gauge the weak $\mathbb{Z}_2$ swap symmetry exchanging $A,B$ by a vertex-local channel
\begin{equation}
\mathcal{N}'=\prod_v \mathcal{N}'_v,\qquad
\mathcal{N}'_v(\rho)=p'\,\rho+(1-p')\,G_v\rho G_v^\dagger,
\end{equation}
with local Gauss operator of $\Z^C_2$ SWAP symmetry
\begin{equation}
G_v=A_v^{(C)}\,\mathrm{SWAP}^{(AB)}_{N(v)}\mathrm{SWAP}^{(AB)}_{E(v)}\mathrm{SWAP}^{(AB)}_{U(v)}~,
\end{equation}
where $A_v^{(C)}=\prod_{v\subset\partial e} X^C_e$ is the $\mathbb{Z}^C_2$ gauge-field star term in the $C$ sector.  Here, $N(v)$, $E(v)$, and $U(v)$ denote the edges emanating from $v$ in the $+\hat{x}$, $+\hat{y}$, and $+\hat{z}$ directions, respectively (so that each edge is assigned to a unique vertex), and $\mathrm{SWAP}^{(AB)}_e$ exchanges the $A$ and $B$ qudits on edge $e$. At the maximally decohered point $p'=p=\tfrac12$, the mixed state $\rho=\mathcal{N}'\rho_1$ has a gauge-invariant non-invertible strong symmetry operator with fermionic loop statistics~\cite{sato2026invariants}. This symmetry operator is understood as a gauge invariant sum of fermionic loop $\Z_2$ 1-form symmetry operators from $A, B$ layers, which corresponds to coset 1-form symmetry under gauging the $\Z_2^C$ SWAP symmetry~\cite{Hsin2024cosine, Hsin2026coset}.

The above mixed state defines a (3+1)D imTO with intrinsic long-range entanglement, enforced by the non-Abelian fermionic loop statistics. In the language of Sec.~\ref{subsection:egs}, this theory corresponds to $E=\mathbb{Z}_2\times\mathbb{Z}_2$ with fermionic-loop anomalies in both factors, $N$ trivial, and $H=\mathbb{Z}_2$ acting weakly by the swap. Weak gauging produces a transparent, anomaly-free $\mathbb{Z}_2$ Wilson line of the $C$ gauge field, while the intrinsically mixed character stems from the anomalous, transparent non-Abelian fermionic loop.

\begin{figure}[htb]
\centering
\includegraphics[width=0.45\textwidth]{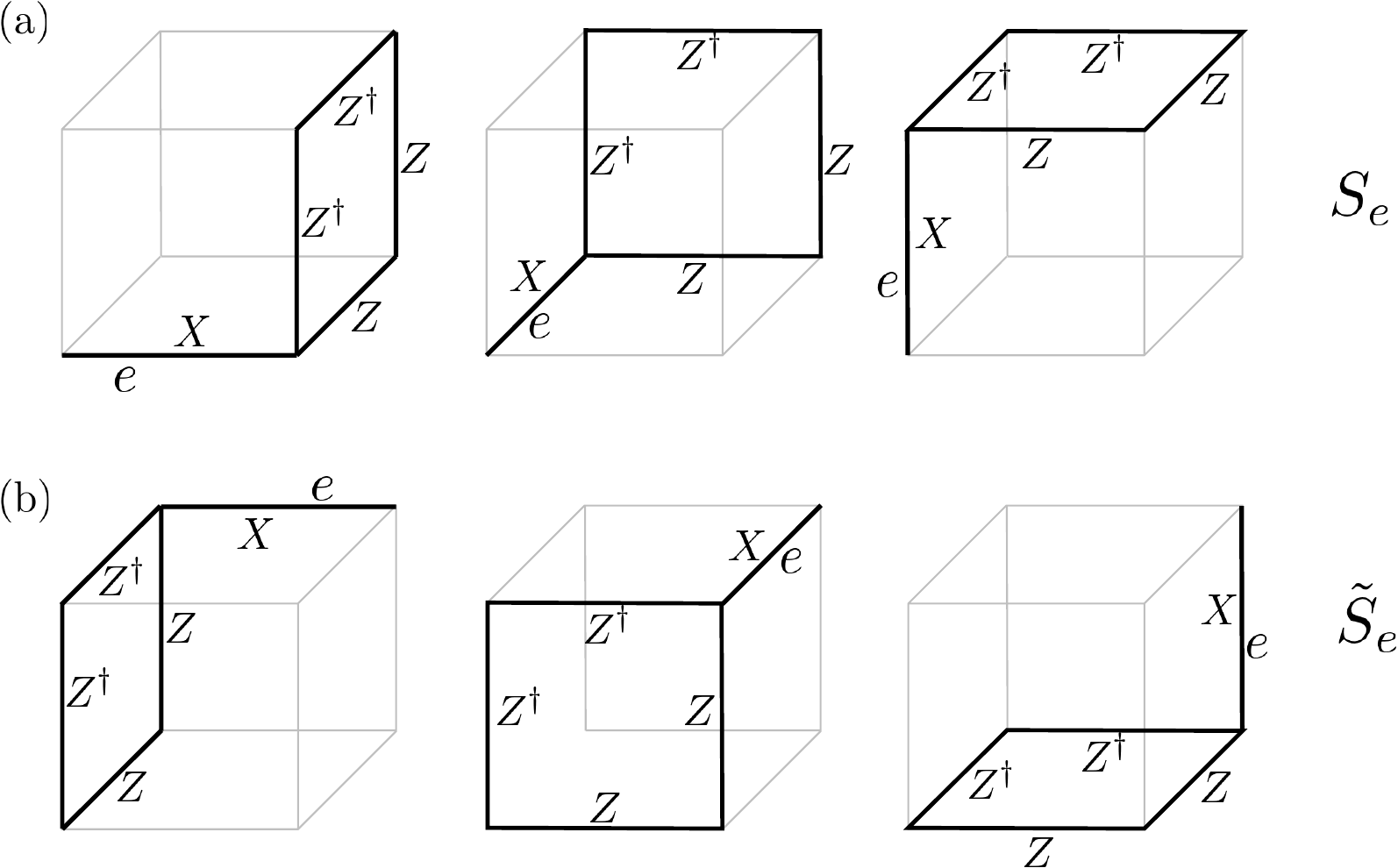}
\caption{Definitions of the operators $S_e, \tilde S_e$.}
\label{fig:Se}
\end{figure}
\subsection{Mixed topological order with an emergent fermion and fermionic loop: FcFl theory}
\label{subsec:FcFl}
Finally, we provide a new example of (3+1)D imTO that consists of a single fermionic particle and single fermionic loop with nontrivial mutual braiding. These topological operators generate the strong symmetry of the mixed state; the strong symmetry corresponds to the (3+1)D $\Z_2$ gauge theory with global gravitational anomaly (dubbed an FcFl theory~\cite{FHH21}), characterized by the fermionic statistics carried by both the particle and loop excitation.

To obtain the desired mixed state, we start with a stacked system of a bosonic $\Z_2$ and a fermionic $\Z_2$ toric code,
\begin{align}
    H = H^A_{\text{fTC}} \otimes \mathds{1}^B + \mathds{1}^A \otimes H^B_{\text{TC}}~,
\end{align}
where the Hamiltonian for each layer $A,B$ is given by \eqref{eq:fTC}, \eqref{eq:Z2TC_qudit}. 
Its ground state is denoted by $\rho_0 =\ket{\text{fTC}}\bra{\text{fTC}}  \otimes \ket{\text{TC}}\bra{\text{TC}}$. We then apply the quantum channel
\begin{align}
    \mathcal{N}= \prod_e \mathcal{N}^A_e  \mathcal{N}^{AB}_e~,
\end{align}
with
\begin{align}
\begin{split}
    \mathcal{N}^A_e(\rho) &= p\rho + (1-p) \tilde L^A_e \rho  \tilde L_e^{A\dagger}~, \\
    \mathcal{N}^{AB}_e(\rho) &= p\rho + (1-p) X^A_e \tilde S^B_e \rho  X^A_e\tilde S_e^{B\dagger}~,
    \end{split}
\end{align}
where the operators $\tilde L_e, \tilde S_e$ on each layer are shown in Figs.~\ref{fig:Le}, \ref{fig:Se}.
The state $\mathcal{N}\rho_0$ then has an anomalous $\Z_2$ 2-form symmetry generated by a line operator
\begin{align}
    L^{AB}_{\text{f}}(\gamma) = \prod_{e\subset\gamma} L^A_{e}(Z_e^B)^2~,
\end{align}
as well as an anomalous $\Z_2$ 1-form (emergent) symmetry generated by a surface operator
\begin{align}
    S^B_{\mathrm{f}}(\Sigma) = \prod_{e\subset \Sigma} S^B_e~.
\end{align}
These operators have nontrivial mutual braiding. In contrast, the fermion line of layer $A$ and the Wilson line of layer $B$ are individually only weak symmetries (each anticommutes with $X^A_e\tilde S^B_e$) as is the membrane of layer $A$, which anticommutes with $\tilde L^A_e$. Only their combination $L^{AB}_{\rm f}$ survives as strong.

The above strong symmetry generates the FcFl theory since it consists of a fermionic particle and a fermionic loop with mutual braiding. In contrast to the examples of Secs.~\ref{subsec:singlefermion}--\ref{subsec:nonabelianfloop}, the strong operators here form a \emph{non-degenerate} braided fusion 2-category, namely $\mathcal{T}$ (see Sec.~\ref{subsection:structureclassification}): the fermion and the fermionic loop detect each other by braiding, so remote detectability is not violated. The state is nevertheless intrinsically mixed, since $\mathcal{T}$ carries the gravitational anomaly $\pi\int w_2 w_3$ and is not expected to be realizable in the ground state of any gapped local Hamiltonian in strictly three spatial dimensions~\cite{FHH21,Chen2023gravitational}. Its long-range entanglement is enforced by the anomalous strong symmetries~\cite{Hsin:2025pqb}. In the next Section, we explain how this theory fits into a general class of theories which serve as ``seeds'' for constructing more intricate imTOs.

\section{General construction of (3+1)D mixed topological order}\label{section:generalconstruction}
The above lattice model examples are constructed by the following procedure:
\begin{enumerate}
    \item We first start with a pure state for copies of $\Z_N$ toric codes (with bosonic/fermionic quasiparticles), and turn on an error channel to obtain a mixed state $\rho$ with strong $\Z_N$ 1-form/2-form symmetries. The strong symmetry of the resulting mixed state forms a braided fusion 2-category which is either degenerate or non-degenerate but anomalous. For instance, the strong symmetry of the model in Sec.~\ref{subsec:singlefermion} consists of a single fermionic quasiparticle, and that in Sec.~\ref{subsec:singlefloop} consists of a single fermionic loop, both of which are transparent and form a degenerate theory. Meanwhile, the strong symmetry of the model in Sec.~\ref{subsec:FcFl} forms the non-degenerate but anomalous theory $\mathcal{T}$.
    
    \item The above mixed state $\rho$ then has a  0-form symmetry, which may be strong or weak. We obtain a new imTO by gauging these strong/weak 0-form symmetries. For instance, the model in Sec.~\ref{subsec:nonabelianfloop} is obtained by gauging the weak $\Z_2$ SWAP symmetry of the two copies of intrinsically mixed $\Z_2$ toric codes, producing a theory with a non-Abelian fermionic loop.
\end{enumerate}

The above construction motivates a general procedure to achieve imTOs in (3+1)D. In fact, a suitable generalization, considering all algebraic structures of finite strong symmetries in (3+1)D TOs, yields a general construction which, as we show in Sec.~\ref{section:allbraided}, produces all braided fusion 2-categories.

\subsection{General construction: theories with only bosonic quasiparticles}\label{subsection:onlyboson}
We first describe the general construction of mixed state topological orders that contain only bosonic quasiparticles:
\begin{enumerate}
    \item We start with a bosonic toric code with an Abelian gauge group $E$ in (3+1)D. We turn on a quantum channel that incoherently proliferates the bosonic/fermionic loop excitation, which corresponds to the $E$ 1-form symmetry with an 't Hooft anomaly $\beta\in\rH^5(B^2E; \mathbb{C}^\times)$. 
    The 't Hooft anomaly describes the bosonic/fermionic loop statistics of the proliferated excitation.
    The resulting mixed state $\rho_{(E,\beta)}$ carries mTO, described by a degenerately braided strongly fusion 2-category $\fB =\tVect^\beta_E$ (see Theorem \ref{thm:mainA}): the strong $E$ 1-form symmetry is now an input for the grading of the category of vector spaces which labels the surface operators. The 't Hooft anomaly $\beta$ controls the braiding of surface operators given two ambient dimensions.
    In the categorical language, the example in Sec.~\ref{subsec:singlefloop} starts with a pure TO described categorically by $\cZ(\tVect_{\Z_2})$, and the imTO 
    corresponds to $E=\Z_2$, with $\beta$ the nontrivial element of $\rH^5(B^2\Z_2;\mathbb{C}^\times)=\Z_2$.
    
    For $\beta$ nontrivial, $\rho_{(E,\beta)}$ is an imTO. For $\beta=0$, in contrast, the proliferated loops are anomaly-free bosons, and the maximally decohered state (e.g., $\rho\propto\prod_v(1+A_v)$ for $E=\mathbb{Z}_2$) is a classical membrane ensemble which can be prepared from a product state by a finite-depth channel, and is hence two-way connected to a trivial state.

    \item Suppose the resulting mixed state $\rho_{(E,\beta)}$ has a weak 0-form finite symmetry $G$, and its normal subgroup $N\subset G$ generates a strong symmetry.
    We allow $G$ to act as an anomalous 0-form symmetry on the braided strongly fusion 2-category $\fB$ formed by the strong 1-form symmetry of $\rho_{(E,\beta)}$, while we require $N$ to act by non-anomalous symmetry.  We gauge the strong $N$ symmetry, and then gauge the weak $G/N$ symmetry. At the level of the Choi state $\ket{\rho_{(E,\beta)}}\rangle$, this corresponds to gauging the $[(N \times N)\rtimes G_{\text{diag}}]/N_{\text{diag}}$ symmetry.    
\end{enumerate}

\subsection{General construction: theories that contain fermionic quasiparticles}
\label{subsection:emergentfermion}

We then provide the  general construction of mixed state topological orders that contain fermionic quasiparticles.

\begin{enumerate}
    \item We consider a mixed state $\rho_{(E,\varsigma,\kappa)}$ that generalizes $\rho_{(E,\beta)}$ in the bosonic case to the case with an emergent fermion. The state $\rho_{(E,\varsigma,\kappa)}$ has the following strong symmetry:
    \begin{itemize}
        \item $\rho_{(E,\varsigma,\kappa)}$ has a strong $\Z_2$ 2-form symmetry that corresponds to the line operator of an emergent fermion. The fermionic statistics corresponds to the 't Hooft anomaly with the 5D response $\pi \int Sq^2 A_3$ using the 3-form background $A_3\in Z^3(M, \Z_2)$.
        \item $\rho_{(E,\varsigma,\kappa)}$ has a strong $E$ 1-form symmetry with $E$ a finite Abelian group. The 1-form symmetry operator has a mutual braiding with the emergent fermion, which is characterized by the degree 2 class $\kappa\in\rH^2(B^2 E;\Z_2)$; it corresponds to the mixed 't Hooft anomaly between 1,2-form symmetries with the (4+1)D response $\pi \int (B^*\kappa) \cup A_3$ using the background gauge fields $B: M\to B^2 E$, $A_3\in Z^3(M, \Z_2)$. $\varsigma$ then labels the self anomaly of the 1-form symmetry $E$; it is an element of the supercohomology of degree 5, $\varsigma \in \SH^{5+\kappa}(B^2E)$. This supercohomology classifies the 't Hooft anomaly for $E$ 1-form symmetry of the theory equipped with the twisted spin structure, whose obstruction class is $w_2 + B^*\kappa$ (see~\cite[Definition 2.10]{Debray:2023iwf} for the full definition). The class $\varsigma$ is understood as the 't Hooft anomaly of the $E$ 1-form symmetry after gauging the $\Z_2$ 2-form symmetry of the state $\rho_{(E,\varsigma,\kappa)}$; the resulting theory after gauging the 2-form symmetry has the 1-form symmetry $E$ and depends on the twisted spin structure~\cite{kobayashi2026sixteenfoldway}, where $E$ has the 't Hooft anomaly $\varsigma$.
    \end{itemize}

    The strong symmetry of the state $\rho_{(E,\varsigma,\kappa)}$ forms a braided fermionic fusion 2-category $\fB =\tSVect^{(\kappa,\varsigma)}_{E}$ (see Theorem \ref{thm:mainB}).
    For instance, the example of the FcFl  imTO constructed in Sec.~\ref{subsec:FcFl} corresponds to $\rho_{(E,\varsigma,\kappa)}$ with $E=\Z_2$, $\kappa\in\rH^2(B^2\Z_2;\Z_2)=\Z_2$ being the nontrivial element (so that the 1, 2-form symmetry have nontrivial mutual braiding), and $\varsigma\in\SH^{5+\kappa}(B^2\Z_2)=\Z_2$ being the nontrivial element that encodes fermionic loop statistics of $\Z_2$ 1-form symmetry.
    
     \item The 0-form gauging then proceeds analogously to step 2 of Sec.~\ref{subsection:onlyboson}: the non-anomalous normal subgroup $N\trianglelefteq G$ is gauged as a strong symmetry, followed by the weak quotient $H=G/N$. Specifically, suppose the mixed state $\rho_{(E,\varsigma,\kappa)}$ has a weak 0-form finite symmetry $G$, and its normal subgroup $N \trianglelefteq G$ generates a strong symmetry. We allow $G$ to act as an anomalous 0-form symmetry on the braided strongly fusion 2-category $\fB$ formed by the strong 1-form symmetry of $\rho_{(E,\varsigma,\kappa)}$, while we require $N$ to act by non-anomalous symmetry.  We gauge the strong $N$ symmetry, and then gauge the weak $G/N$ symmetry.
    At the level of the Choi state $\ket{\rho_{(E,\varsigma,\kappa)}}\rangle$, this corresponds to gauging the $[(N \times N)\rtimes G_{\text{diag}}]/N_{\text{diag}}$ symmetry.  
\end{enumerate}

\section{Mixed topological orders realize all braided fusion 2-categories}
\label{section:allbraided}
The theories presented thus far reveal a close connection between the topological operators of imTOs and braided fusion 2-categories. By adopting the following ansatz for mTOs, we place this relationship on a systematic footing, enabling more general statements about their characterization using category theory. This will be the focus of the remainder of the paper.

\begin{ansatz}
\label{ansatz:mTO}
The strong topological operators of a (3+1)D mixed state topological order form a braided fusion 2-category.
\end{ansatz}

\begin{notation*}
 In this Ansatz, and for all the examples in this paper, we are only concerned with \textit{bosonic} topological orders, which amounts to the fact that the theory does not need to couple to a spin structure to be defined. Even though the topological order is bosonic, there can still be emergent fermions in the spectrum of operators as discussed in Sec. \ref{sec:latmod}. To distinguish whether or not a theory has an emergent fermion we use the following terminology. In the case when all line operators of a bosonic topological order are bosons (loops may still carry fermionic statistics), the theory will be described by a ``bosonic'' braided fusion 2-category. In the case when the spectrum contains emergent fermions, the theory will be described by a ``emergently-fermionic'' (EF) braided fusion 2-category. We will not be concerned with inherently fermionic (i)mTOs in this paper, which require a spin structure to be defined. 
\end{notation*}

The  definition of a  braided fusion 2-category with emergent fermions, along with details on $G$-crossed extensions are given in Appendix~\ref{appendix:F2C}. In particular, the reader is directed to Sec.~\ref{subsection:enrichment} therein for notation and definitions.   With this connection presented in the ansatz in place, we present the parametrization of mTOs by way of the data parameterizing braided fusion 2-categories.

Recall that a \textit{supergroup} $(H, z)$ is a group $H$ with a central element $z \in H$ of order two. We let $\Rep(H, z)$ denote the symmetric
monoidal category with underlying monoidal category $\Rep(H)$ and braiding twisted by the action
of $z$. A result of Deligne \cite{deligne2002} shows that  $\Rep(H, z)$ are all the symmetric monoidal categories which are not Tannakian. These will play an important role in the study of EF braided fusion 2-categories.

 \begin{defn}[\cite{JFY}]\label{def:strongly}
     A bosonic (resp. EF) braided fusion 2-category $\fC$ with monoidal unit $\mathds{1}$ is a braided fusion 2-category such that $ \Omega \fC:=\End_{\mathds{1}}(\fC)$ is $\Rep(G)$ (resp. $\Rep(G,z)$), for some finite group $G$. 
 \end{defn}

\begin{theorem}
\label{thm:mainA}
Mixed state topological orders in (3+1)D whose point-like excitations are all bosons correspond to braided fusion 2-categories parametrized, up to equivalence, by the following data:
    \begin{enumerate}
        \item \label{item:A1}A finite abelian group $E$;
    \item \label{item:A2} A class $\beta \in \rH^5(B^2 E; \mathbb{C}^\times)$;
    \item \label{item:A3} A finite group N;
    \item \label{item:AmaptoPic}  A faithfully graded $N$-crossed braided extension $\fC$ of $\fB = \tVect^\beta_E$;
    \item \label{item:A5} A finite group $H$;
    \item \label{Gaction} An action $\rho:B H \rightarrow B\sAut^{\br}(\fC^N)$, where $\sAut^{\br}(\fC^N)$ denotes the 3-group of braided autoequivalences of the equivariantization of $\fC$ by $N$.
    \end{enumerate}
\end{theorem}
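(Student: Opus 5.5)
By Ansatz~\ref{ansatz:mTO}, the statement reduces to a purely categorical one. It suffices to show that bosonic braided fusion 2-categories $\fD$ (Definition~\ref{def:strongly}, with $\Omega\fD\simeq\Rep(G)$ Tannakian) are classified up to equivalence by the listed data. I would build the classification by peeling off the line operators in two stages, in reverse order to the physical construction of Sec.~\ref{subsection:onlyboson}.

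\textbf{Step 1: split off the weak part $H$.} Since $\Omega\fD\simeq\Rep(G)$ is Tannakian, the fiber functor $\Rep(G)\to\mathbf{Vect}$ lets me de-equivariantize $\fD$ by the full symmetric subcategory of lines. This produces a $G$-crossed braided fusion 2-category $\fD_G$ whose looping is $\mathbf{Vect}$, together with an isomorphism $\fD\simeq(\fD_G)^G$. The grading of $\fD_G$ need not be faithful. Let $N\trianglelefteq G$ be its support, meaning the normal subgroup of degrees that are actually occupied, and set $H=G/N$. Then $\fC:=\fD_G|_N$ is a faithfully graded $N$-crossed braided extension of its trivial component $\fB=(\fD_G)_e$. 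The remaining $G$-structure is recorded by $H$ acting on $\fC^N$ by braided autoequivalences, and $\fD\simeq(\fC^N)^H$. This is the statement that equivariantization is an equivalence between 2-categories with $BH$-actions and categories containing $\Rep(H)$ centrally, which is the 2-categorical analogue of the Drinfeld--Gelaki--Nikshych--Ostrik correspondence. Conversely, given $(N,\fC,H,\rho)$, the inverse construction equivariantizes twice. So this step gives a bijection onto the data in items~\ref{item:A3}--\ref{Gaction}, once $\fB$ is understood.

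\textbf{Step 2: identify $\fB$.} $\fB$ is a braided fusion 2-category with $\Omega\fB\simeq\mathbf{Vect}$, that is, no nontrivial lines. By the result of~\cite{JFY} quoted in the introduction, such a fusion 2-category is grouplike, $\tVect^{\pi}_E$. The braiding forces $E$ to be abelian (item~\ref{item:A1}). Braided structures on $\tVect_E$ up to equivalence correspond to $E_3$ 3-group structures, i.e.\ to maps $B^3E\to B^5\mathbb{C}^\times$ at the stable level. This should give $\beta\in\rH^5(B^2E;\mathbb{C}^\times)$, with the associator recovered as a looping of $\beta$ (item~\ref{item:A2}).

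\textbf{Step 3: check that equivalences match.} Equivalences of the data must be matched with braided equivalences of $\fD$: conjugating the $N$-extension, and acting by $\sAut^{\br}$ and $H$-coboundaries. I expect this bookkeeping to be the main obstacle. Two points need care.

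The first is that the de-equivariantization $\fD_G$ genuinely exists and is $G$-crossed braided in the 2-categorical setting. For this I would invoke the appendix results on $G$-crossed extensions (Sec.~\ref{subsection:enrichment}) and treat $\Rep(G)\subset\Omega\fD$ as sitting in the Müger center of the looping.

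The second is that the splitting $G\to H$ is canonical. This requires showing that the unsupported degrees contribute only through an $H$-action on $\fC^N$, not through extra extension data. I would argue this via the universal property of equivariantization: a non-faithful grading supported on $N$ is the same thing as a faithful $N$-grading plus a $G/N$-action. The anomalous part of that action is precisely permitted because it sits in $\rho$ and not in the grading.

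Finally, matching with Sec.~\ref{subsection:onlyboson} interprets $\fB$ as the decohered $\rho_{(E,\beta)}$, $N$ as the strong gauging, and $H$ as the weak gauging.
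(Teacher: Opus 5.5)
Your plan is the paper's argument: de-equivariantize by $\tRep(G)$, take $N$ to be the support of the resulting $G$-crossed grading (equivalently $\Omega\cZ_{(2)}(\fD)=\Rep(G/N)$, which is how the paper defines $H$), identify the line-free trivial component as $\tVect^\beta_E$, and recover $\fD\simeq(\fC^N)^H$. The paper makes your Step~3 ``universal property'' precise through Lemma~\ref{lem:nonfaithful-crossed}, and your Step~2 through Proposition~\ref{prop:braidedextn} with $B^2\sPic^{\br}(\tVect)\simeq B^5\mathbb{C}^\times$; note that the lemma puts a $G$-action on $\fC$, compatible with its $N$-crossed structure, which only descends to an $H$-action after passing to $\fC^N$, rather than a $G/N$-action on $\fC$ itself.

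One slip to fix: a braiding is an $E_2$ structure, so braided structures on the line-free $\tVect_E$ are classified by maps $B^2E\to B^5\mathbb{C}^\times$, i.e.\ exactly $\rH^5(B^2E;\mathbb{C}^\times)$, which is unstable. An $E_3$ structure would be a syllepsis and ``stable'' would mean symmetric, so your intermediate claim about maps out of $B^3E$ ``at the stable level'' is wrong, even though your final class is right.
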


Let us summarize how this parametrization reflects the construction in Sec. \ref{subsection:onlyboson}. The braided categories of the form $\mathfrak{B}=\mathbf{2Vect}^\beta_E$ are the mTOs (intrinsically mixed for $\beta\neq0$) constructed from the decohered toric codes with gauge group $E$ and 't Hooft anomaly $\beta$. We would like to gauge a weak $G$-symmetry, which is a symmetry of the mTO to produce one with nontrivial line operators i.e. 2-form symmetries. As stressed in the second point in Sec. \ref{subsection:onlyboson}, we allow $N\subset G$ to act non-anomalously as a strong symmetry, and gauge the $N$ symmetry. This is precisely captured by the two step procedure: take a faithfully graded  $N$-crossed braided extension of $\fB$ given by $\fC$ and equivariantize by $N$. We will elaborate more on this in the next section.  Finally, we have $H = G/N$ as the left over weak symmetry that acts on $\fC^N$ which we can gauge. The combined operation was described on the Choi state in Sec. \ref{subsection:onlyboson}. 

In a similar vein as Theorem \ref{thm:mainA}, we give the data which realizes all mTOs (intrinsic or not) that contain an emergent fermion.
\begin{theorem}\label{thm:mainB}
    Mixed state topological orders in (3+1)D whose point-like excitations include an emergent fermion correspond to braided fusion 2-categories parametrized, up to equivalence, by the following data:
    \begin{enumerate}
        \item \label{item:B1} A finite abelian group $E$;
        \item \label{item:B2} A class $\kappa \in \rH^2(B^2 E; \mathbb{Z}_2)$ and a  supercohomology  class $\varsigma \in \SH^{5+\kappa}(B^2 E)$\footnote{We refer the reader to Appendix \ref{subsection:fermionicbackground} for the definition of supercohomology.};
        \item A finite group $N$;
        \item  \label{item:BmaptoPic} A faithfully graded $N$-crossed braided extension $\fC$ of $\fB = \tSVect^{(\kappa,\varsigma)}_{E}$, a braided fusion 2-category with $\SVect$ as the category of one-morphisms;
        \item \label{item:groupH} A finite group $H$;
        \item An action $\rho:B H\rightarrow B\sAut^{\br}(\fC^N)$.
    \end{enumerate}
\end{theorem}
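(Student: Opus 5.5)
By Ansatz~\ref{ansatz:mTO}, the content of Theorem~\ref{thm:mainB} is purely categorical. We must show that every EF braided fusion 2-category $\fD$ (Definition~\ref{def:strongly}, with $\Omega\fD\simeq\Rep(G,z)$) is determined, up to equivalence, by the listed data, and conversely that each such datum yields one. The plan mirrors the proof of Theorem~\ref{thm:mainA}, replacing the Tannakian fiber functor by the super fiber functor supplied by Deligne's theorem.

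\textbf{Step 1: de-equivariantize to remove the non-invertible lines.} Since $\Omega\fD\simeq\Rep(G,z)$, the algebra $\mathrm{Fun}(G)$ in $\Rep(G)$ is commutative in $\Rep(G,z)$ only up to the parity twist. So I would take the maximal Tannakian subcategory $\Rep(G/\langle z\rangle)\subset\Rep(G,z)$ and de-equivariantize $\fD$ by it. The result is a $G/\langle z\rangle$-crossed braided fusion 2-category whose category of lines is $\SVect$.

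\textbf{Step 2: separate the grading.} The grading of this crossed category need not be faithful. Let $N\trianglelefteq G/\langle z\rangle$ be the support of the grading. Then:
\begin{itemize}
\item the $N$-crossed part $\fC$ is a faithfully graded extension of its trivial component $\fB$;
\item $H=(G/\langle z\rangle)/N$ acts on $\fC^N$ by braided autoequivalences;
\item $\fD\simeq(\fC^N)^H$.
\end{itemize}
This uses the standard correspondence between $G$-crossed braided 2-categories and braided 2-categories containing $\Rep(G)$, applied in two stages (equivariantize by $N$, then by $H$). I would take this correspondence from the appendix.

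\textbf{Step 3: identify $\fB$.} $\fB$ is a braided fusion 2-category with $\Omega\fB=\SVect$. By the classification of fusion 2-categories with no nontrivial (bosonic) lines \cite{JFY}, every simple object is invertible up to the condensation of $\psi$. Hence $\pi_0\fB=E$ is a finite abelian group. Braided structures are then classified by maps of $E_\infty$/spectra
\[
\Sigma^2 HE\to \Sigma^{4}\text{(super-Picard of }\SVect),
\]
that is, by twisted supercohomology classes. The $\Z_2$ component of the $k$-invariant gives $\kappa\in\rH^2(B^2E;\Z_2)$, which records how surfaces braid with $\psi$. The remaining data gives $\varsigma\in\SH^{5+\kappa}(B^2E)$, following the definition in Appendix~\ref{subsection:fermionicbackground} and \cite{Debray:2023iwf}. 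This produces $\fB=\tSVect^{(\kappa,\varsigma)}_E$. The converse direction, building $\fD$ from the data by equivariantizing twice, is immediate.

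\textbf{Main obstacle.} I expect the hardest part to be showing that equivalence classes of $\fD$ correspond exactly to equivalence classes of the data. In particular, the choice of $N$ and the splitting in Step 2 must be canonical, so that automorphisms of $\fB$ and of the extension are absorbed by the stated equivalences. I would first argue canonicity via the unique maximal Tannakian subcategory. I would then check that the identification of the $E_\infty$ structure with $\SH^{5+\kappa}$ respects these equivalences.
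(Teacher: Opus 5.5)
Your Steps 1 and 2 are the paper's argument. You de-equivariantize by the maximal Tannakian part $\Rep(G/\langle z\rangle)$ of $\Omega\fD$, take the support $N$ of the possibly non-faithful grading as the strong symmetry and $H$ as the weak quotient, and recover $\fD\simeq(\fC^N)^H$. The paper packages this as Lemma~\ref{lem:nonfaithful-crossed} plus two-stage equivariantization.

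The gap is in Step 3, in the object you claim classifies $\fB$. Since $\Omega\fB=\SVect$, the unit component of $\fB$ is $\tSVect$. So $\fB$ is a faithfully $E$-graded braided extension of $\tSVect$, and by Proposition~\ref{prop:braidedextn} such extensions are classified by pointed maps of spaces $B^2E\to B^2\sPic^{\br}(\tSVect)$. Your substitute, maps of spectra $\Sigma^2HE\to\Sigma^4(\text{super-Picard of }\SVect)$, fails for two reasons.
\begin{itemize}
\item A braiding is an $E_2$-structure, not an $E_\infty$ one. Maps of spectra only see the stable cohomology of $HE$, not the unstable cohomology of $K(E,2)$, so they miss genuinely braided classes. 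For example, with $E=\Z_2\times\Z_2$ the class $\tfrac12 B_1\cup\mathrm{Sq}^1B_2\in\rH^5(B^2E;\mathbb{C}^\times)$ is nonzero. But a stable map out of the wedge $\Sigma^2H\Z_2\vee\Sigma^2H\Z_2$ produces no cross terms.
\item Your target is concentrated in degrees $\geq 4$, so nothing in it can produce a degree-$2$ class. Hence $\kappa\in\rH^2(B^2E;\Z_2)$ has nowhere to come from. Calling it ``the $\Z_2$ component of the $k$-invariant'' does not supply it.
\end{itemize}

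The missing ingredient is Lemma~\ref{eq:piPicbr}, which identifies $\sPic^{\br}(\tSVect)\simeq\cZ(\mathbf{3SVect})^\times$.
\begin{itemize}
\item Its $\pi_1,\pi_2,\pi_3=\Z_2,\Z_2,\mathbb{C}^\times$ come from $\Omega\cZ(\mathbf{3SVect})\simeq\cZ_{(2)}(\tSVect)\simeq\tSVect$.
\item Crucially, its $\pi_0=\Z_2$ comes from nondegeneracy of $\cZ(\mathbf{3SVect})$: the nontrivial component is a 3d object detected by the fermion line.
\item After double delooping, this $\pi_0$ sits in degree $2$, and this is exactly where $\kappa$ lives.
\item The upper three layers, a shift of $\tSVect^\times$, are twisted by $(-1)^F$, the generator of $\sAut^{\br}(\SVect)\simeq B\Z_2$. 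So the Postnikov stage $B^2\sPic^{\br}(\tSVect)\to K(\Z_2,2)$ is not a product.
\item The remaining datum is therefore a lift of $\kappa$, i.e. a map over $B^2\Z_2$ rather than a map into a fixed spectrum. That is precisely the definition of $\SH^{5+\kappa}(B^2E)$ in Appendix~\ref{subsection:fermionicbackground}.
\end{itemize}

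With Step 3 repaired this way, your argument is the paper's. The paper additionally unpacks Item~\ref{item:BmaptoPic} via $B\sPic(\fB)$: it uses that $\pi_1$ is trivial, a class $\tau\in\rH^2(BN;E)$ with $\varsigma\circ\tau$ trivial, and an $\SH^{4+\kappa\circ\tau}(BN)$-torsor. The statement itself does not require this. The injectivity and canonicity issue you flag as the main obstacle is not settled in the paper either. Its argument only shows that each datum is needed and that (de-)equivariantization inverts the construction.
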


\textit{Pure-state versus intrinsically mixed.} By Theorem~\ref{thm:equiv}, equivariantizing a $G$-crossed braided fusion 2-category yields a non-degenerate braided fusion 2-category if and only if the grading is faithful and the trivially graded component is non-degenerate. In the all-bosons case, $\tVect^\beta_E$ has no nontrivial line operators, so all its surface operators are transparent and it is non-degenerate only for $E$ trivial. Pure-state (3+1)D TOs with only bosonic point-like excitations therefore arise precisely from faithfully graded extensions of $\mathbf{2Vect}$, recovering the classification of Ref.~\cite{PhysRevX.8.021074}. In the emergent-fermion case, $\mathbf{2SVect}^{(\kappa,\varsigma)}_E$ is non-degenerate only for $E=\mathbb{Z}_2$ with $\kappa$ nontrivial, i.e.\ for $\mathcal{S}$ or $\mathcal{T}$; since $\mathcal{T}$ carries the gravitational anomaly $\pi\int w_2w_3$, pure states arise from faithfully graded extensions of $\mathcal{S}$. All other data describe mixed states. We expect these to be intrinsically mixed precisely when the strong operators carry an anomaly that cannot be realized in a (3+1)D ground state, which occurs in two ways: (i) the transparent operators are anomalous, as for the fermion of Sec.~\ref{subsec:singlefermion} and the fermionic loops of Secs.~\ref{subsec:singlefloop} and~\ref{subsec:nonabelianfloop}; or (ii) the trivially graded component is $\mathcal{T}$, as for the FcFl theory of Sec.~\ref{subsec:FcFl}. When the degeneracy instead stems only from anomaly-free bosonic transparent operators (e.g., $E$ nontrivial with $\beta=0$), the corresponding states are essentially classical ensembles which, under two-way channel equivalence, are connected to a pure state. Establishing this criterion in general is left to future work.

\subsection{Structure of Mixed Topological Order}\label{subsection:structureclassification}
We now explain in detail from where the data of braided fusion 2-categories arises, and also their categorical construction. If $\fD$ is a bosonic braided fusion 2-category, then $\Omega \fD$ will be a Tannakian symmetric fusion 1-category, say $\mathbf{Rep}(G)$, for some finite group $G$. On the other hand, if $\fD$ is a EF braided fusion 2-category, then $\Omega \fD$ is a super-Tannakian symmetric fusion 1-category of the form $\Rep({G},z)$, for some super-group $({G},z)$. In particular, it contains $\Rep({G})$ with $G=({G},z)/z$ as a symmetric tensor fusion sub-1-category. In either case, we find that there is a canonical braided tensor 2-functor $\mathbf{2Rep}({G})\rightarrow\mathfrak{D}$, which we can use to de-equivariantize $\mathfrak{D}$. The end result is a  $G$-crossed braided strongly fusion 2-category,  which is either bosonic or emergently fermionic depending on the input $\mathfrak{D}$. An important detail to notice here is that this category need not be faithfully $G$-graded. In essence, the difference between mTOs and pure TOs in (3+1)D lies in the fact that pure-state TOs always arise from faithfully $G$-graded $G$-crossed braided fusion 2-categories.
From a $G$-crossed braided strongly fusion 2-category, we can equivariantize by the $G$-action to recover $\mathfrak{D}$, hence inverting the process, and this can be interpreted as reinstating the line operators. It therefore suffices to classify non-faithfully graded $G$-crossed braided strongly fusion 2-category to classify braided fusion 2-categories and thereby partially classify (3+1)D mTOs.

\subsubsection{The All Bosons Case}
\label{subsection:allboson}
We begin by giving the data of a braided bosonic strongly fusion 2-category, and then working up to the $G$-crossed braided extension. By applying Proposition~\ref{prop:braidedextn} to the ``vacuum theory'' i.e. $\tVect$, we find that braided bosonic strongly fusion 2-categories are classified by a finite abelian group $E$
together with a class $\beta \in\rH^5(B^2 E; \mathbb{C}^\times)$. This can be seen explicitly by noting that the space that classifies faithfully $E$-graded braided extensions of $\tVect$, by Proposition~\ref{prop:braidedextn}, is $B^2 \sPic^{\br}(\tVect)$ which has only a single nontrivial homotopy group, $\pi_5(B^2\sPic^{\br}(\tVect)) = \mathbb{C}^\times$.

Physically, the surface operators are one-form symmetries with fusion rules given by the abelian group $E$. The class in $\rH^5(B^2 E; \mathbb{C}^\times)$ parametrizes grouplike 2-monoidal 2-categories, hence gives a braiding to the operators. Let $\fB=\tVect^\beta_E$ denote the braided bosonic strongly fusion 2-categories corresponding to the aforementioned data. The data of the map in Item \ref{item:AmaptoPic}  in Theorem \ref{thm:mainA} requires the fact that the homotopy groups of $\sPic^{}(\tVect^\beta_{E})$ are given by:
\begin{equation}\label{eq:pic2Vec}
\begin{tabular}{|c|c|c|c|c|}
\hline
$\pi_0$ & $\pi_1$ & $\pi_2$ & $\pi_3$\\
\hline 
$*$ & $E$ &  $0$ &  $\mathbb{C}^{\times}$\\
\hline
\end{tabular}
\end{equation}
where the single nontrivial $k$-invariant is given by $\beta$.

The groups $\pi_{\geq 1}(\sPic^{}(\tVect^\beta_{E}))$ as well as the $k$-invariant can be read off directly from the fact that, by construction, 
there is a fiber sequence 
\begin{equation}
    B \sPic(\tVect^\beta_{E}) \longrightarrow B^2 E \xlongrightarrow{\beta} B^5\mathbb C^\times\,.
\end{equation}
We now only have to argue that $\pi_{0}(\sPic^{}(\tVect^\beta_{ E}))$ is trivial. But it was shown in \cite[Proposition 3.4.1]{Decoppet:2023bay} that every invertible $\tVect^\beta_{ E}$-bimodule 2-category is given by an autoequivalence of $\tVect^\beta_{ E}$. It follows immediately that every invertible $\tVect^\beta_{ E}$-module 2-category must be trivial as asserted.

In building a non-faithfully $G$-graded $G$-crossed braided fusion 2-category out of $\fB$ we can first consider the part which is faithfully $N$-graded, where $N$ is a subgroup of $G$. 

\begin{lemma}\label{lem:nonfaithful-crossed}
Let $\fB$ be a braided fusion $2$-category, and let $N \trianglelefteq G$ be a normal subgroup. Faithfully $N$-graded $G$-crossed braided extensions of $\mathcal B$ are classified by the following data:
\begin{enumerate}
    \item A faithfully $N$-graded $N$-crossed braided extension
    $\fC$ of $\fB$;

    \item An action of $G$ on $\fC$ preserving the crossed
    braiding and inducing conjugation on the grading;

    \item An identification between the two induced actions of
    $N$ on $\fC$.
\end{enumerate}
\end{lemma}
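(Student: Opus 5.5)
The plan is to prove this by a homotopy-theoretic argument modeled on the ENO-style classification of crossed extensions (Proposition~\ref{prop:braidedextn}), adapted to a non-faithful grading. A $G$-crossed braided extension $\fD$ of $\fB$ whose support is $N$ is, by the 2-categorical analogue of the Jones--Morrison--Penneys--Cui description, the same as a $G$-crossed braided fusion 2-category with trivial component $\fB$ and grading $\fD=\bigoplus_{n\in N}\fD_n$. The first step is to restrict the $G$-action to $N$. This forgets the action of $G\setminus N$ and produces a faithfully $N$-graded $N$-crossed braided extension $\fC:=\fD$ of $\fB$, which gives item 1. The action of $G$ on $\fD$ is then item 2: it preserves the crossed braiding by definition, and it induces conjugation on the grading because $N$ is normal and $g(\fD_n)\subseteq\fD_{gng^{-1}}$. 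Item 3 arises because $\fC$, being $N$-crossed, already carries an action of $N$, while restricting the $G$-action of item 2 to $N$ gives a second one. In $\fD$ these two actions coincide by construction, so $\fD$ supplies a specified equivalence between them.

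For the converse, the data are repackaged in classifying-space language. An $N$-crossed braided extension corresponds to a map $BN\to B\sPic^{\br}(\fB)$. Equivalently, the pair consisting of the fusion 2-category $\fC$ and its $N$-crossed structure is the same as the Drinfeld-centre datum of a braided functor $\fB\to\cZ(\fC)$, lifted to a $G$-equivariant one. I would use the standard fact that a $G$-crossed braided category is equivalent to a braided category containing $\tRep(G)$ in its Müger centralizer-compatible way, namely its equivariantization. Concretely, given the data 1--3, form $\fC^N$, which is braided and contains $\tRep(N)$. The action of item 2, glued along item 3, descends to an action of $G/N$ on $\fC^N$. Equivalently, it assembles into a map $BG\to B\sAut(\ldots)$ that lifts the action through the fibration $BN\to BG\to B(G/N)$. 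Equivariantizing then gives a braided fusion 2-category $\fE$ containing $\tRep(G)$. Its de-equivariantization by $\tRep(G)$ is a $G$-crossed braided category. Since its components are indexed by $N$-degrees, with the other degrees vanishing, it has support exactly $N$. This construction inverts the restriction of the previous paragraph.

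The obstacle I anticipate is showing that these two constructions are mutually inverse up to equivalence, including matching the higher coherence data. In particular, item 3 must be the complete coherence datum and not merely a property: an identification of the two $N$-actions as monoidal 2-functors $\underline{N}\to\sAut(\fC)$, together with its compatibility with the $G$-action, which is encoded by homotopies in a fibre sequence. My first attempt at this step would be to phrase everything as a homotopy pullback. The space of $G$-crossed extensions supported on $N$ would be identified with the homotopy fibre product of maps $BN\to B\sPic^{\br}(\fB)$ and of $G$-equivariant structures, taken over the space of $N$-actions. The coherence statement would then follow from the universal property of the pullback applied to $BN\to BG$. The remaining steps, such as checking faithfulness of the grading and verifying that conjugation is induced on it, are routine.
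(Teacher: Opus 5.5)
Your forward direction is the same as the paper's: restrict the $G$-action to $N$, which yields items 1--3 with item 3 the tautological identification.

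For the converse you take a genuinely different route. The paper argues directly. It says the $G$-action together with (3) promotes the given $N$-crossed braiding to a $G$-crossed one. Because the support is $N$, the crossed braiding only involves the $N$-action, and (3) is what lets one state its $G$-equivariance. All coherences are left as ``the obvious'' ones.

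You instead go through (de-)equivariantization:
\begin{itemize}
\item pass from $\fC$ to $\fC^N$;
\item descend the $G$-action, glued along (3), to a braided $G/N$-action on $\fC^N$;
\item equivariantize to get $(\fC^N)^{G/N}\supset\tRep(G)$;
\item de-equivariantize.
\end{itemize}

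Your route has two advantages. First, the full $G$-crossed braided structure and its higher coherences come for free from the equivalence theorem, rather than being asserted. Second, it directly produces $\fD^G\simeq(\fC^N)^{G/N}$. That is exactly the ``strong $N$, then weak $G/N$'' two-step gauging encoded by the action $\rho$ in Theorem~\ref{thm:mainA}, so your argument connects the lemma to how it is actually used.

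The cost is heavier input:
\begin{itemize}
\item You need the equivariantization equivalence with no nondegeneracy hypothesis. Theorem~\ref{thm:equiv} as printed requires a nondegenerate trivially graded component, which fails for the degenerate $\fB=\tVect^\beta_E$ of interest.
\item You need the descent of an action to the quotient $G/N$ on homotopy fixed points.
\item You still have to check that de-equivariantizing returns $\fC$ with its original $N$-crossed braiding and $G$-action, so that the two constructions are mutually inverse. You flag this but do not carry it out; the paper does not either.
\end{itemize}

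Two misstatements should be fixed before you attempt your homotopy-pullback formulation. Faithfully graded $N$-crossed braided extensions are classified by maps $BN\to B\sPic(\fB)$ (Theorem~\ref{thm:crossedextn}), not $BN\to B\sPic^{\br}(\fB)$. Also, Proposition~\ref{prop:braidedextn} concerns braided $E$-graded extensions via $B^2\sPic^{\br}(\fB)$, not crossed extensions. Your concrete equivariantization construction does not depend on either claim, so neither is fatal. However, a pullback built over $B\sPic^{\br}(\fB)$ would classify the wrong objects.
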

\begin{proof}
Let $\mathfrak{C}$ be a faithfully $N$-graded $G$-crossed braided extension of $\mathfrak{B}$. In particular, we have that $\mathfrak{C}$ is equipped with an action by $G$ that preserves the crossed braiding and acts by conjugation on the grading. Restricting this action to $N$, we find that $\mathfrak{C}$ is a faithfully graded $N$-crossed braided fusion 2-category. These two structures have to be compatible in the obvious way. Conversely, given the data (1)--(3), the $G$-action on $\mathcal{C}$ together with the identification (3) promotes the $N$-crossed braiding to a $G$-crossed braiding, yielding a faithfully $N$-graded $G$-crossed braided extension of $\mathfrak{B}$. 
\end{proof}

If $\fD$ is a braided fusion 2-category with $\Omega \fD = \Rep(G)$, one can consider $\Omega \cZ_{(2)}(\fD)$ which is a symmetric fusion subcategory of $\Omega\fD$. See  Definition \ref{def:syllepticcenter} for the details on the sylleptic center $\cZ_{(2)}(-)$. In this case $\Omega \cZ_{(2)}(\fD) = \Rep(H)$, with $G\twoheadrightarrow H$ and we can take $N$ to be the kernel of the projection so that there is a short exact sequence 
\begin{equation}
    1\rightarrow N \rightarrow G \rightarrow H \rightarrow 1\,.
\end{equation}

  The categories $\fC$ in the first item of Lemma~\ref{lem:nonfaithful-crossed} are classified by homotopy classes of maps $B N \to B\sPic(\fB)$. Since $N$ serves as a faithful grading, this implies that we are extending  $\fB$ by a symmetry $N$ that is non-anomalous.
 Such a faithfully graded $N$-crossed braided extension of $\fB$ is given by $\fC = \oplus_{g\in N} \fB_g$, where $\fB_e = \fB$ and  $\fB_g$ is nontrivial for each element of $g\in N$. In particular, the resulting category also has an action by $N$ so it makes sense to take its equivariantization by $N$. This gives the category which is denoted $\fC^N$, and $\Omega \fC^N = \Rep(N)$. Finally, Item \ref{Gaction} of Theorem \ref{thm:mainA} endows $\fC^N$ with a $H$-action by braided autoequivalences. 
 Through equivariantizing the $H$-action on $\fC^N$, i.e. classically gauging, we can obtain a braided fusion 2-category $\fD$ with non-trivial line operators labeled by $\Rep(G)$.
In order to ensure that $\Omega\fD$ is indeed $\Rep(G)$ the action of $H$ on $\mathfrak{C}^N$ must act on $\Omega\mathfrak{C}^N=\mathbf{Rep}(N)$ so that $\mathbf{Rep}(N)^H = \mathbf{Rep}(G)$.

This concludes the justification of why each piece of data of Theorem \ref{thm:mainA} is necessary to classify braided fusion 2-categories, and hence characterizes (and partially classifies) bosonic (3+1)D mTOs:
 \begin{itemize}
     \item The data of Items~\ref{item:A1} and~\ref{item:A2} come from the data that goes into specifying $\fB$.
     \item Item~\ref{item:A3} specifies the normal subgroup $N\trianglelefteq G$ of the 0-form symmetry that acts strongly.
     \item The data of Item~\ref{item:AmaptoPic} encodes the strong $N$-symmetry, which must act non-anomalously in order to be gauged.
     \item The data of Item~\ref{item:A5} gives the group $H$, which records the non-faithfulness of the $G$-grading. In this sense $H$ can act anomalously i.e. it is the weak symmetry used to define the mTO.
     \item Item ~\ref{Gaction} specifies how the weak symmetry $H$ acts on the strong sector $\mathcal{C}^N$, including how $\mathbf{Rep}(N)$ is extended to $\mathbf{Rep}(G)$.
 \end{itemize}
 Consequently, a $G$-crossed braided fusion 2-category with trivially graded component $\mathbf{2Vect}^\beta_E$ equivariantizes to a non-degenerate braided fusion 2-category (and hence can describe a pure-state TO) only when the grading is faithful \emph{and} $E$ is trivial, consistent with the criterion stated after Theorem~\ref{thm:mainB}.
 
\subsubsection{The Emergent Fermionic Case}
We now consider the case when the spectrum has an emergent fermion. Following the logic outlined in the beginning of Sec. \ref{subsection:structureclassification}, we now employ Proposition~\ref{prop:braidedextn} to classify braided fermionic strongly fusion 2-categories. Such categories arise as $E$-graded braided extensions of $\tSVect$:
 \begin{lemma}
 \label{eq:piPicbr}
The homotopy groups of $\sPic^{\br}(\tSVect)$ are given by
\begin{equation}\label{eq:picbr2svect}
\begin{tabular}{|c|c|c|c|c|}
\hline
$\pi_0$ & $\pi_1$ & $\pi_2$ & $\pi_3$\\
\hline 
$\Z_2$ & $\Z_2$ &  $\Z_2$ &  $\mathbb{C}^{\times}$\,. \\
\hline
\end{tabular}
\end{equation}
\end{lemma}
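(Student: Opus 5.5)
The plan is to compute $\sPic^{\br}(\tSVect)$ by identifying it, as a 3-groupoid (a 4-type with contractible-beyond-$\pi_3$ components), with the space of invertible objects of the Drinfeld center $\cZ(\tSVect)$, or equivalently with the 3-group of braided invertible module 2-categories. For a braided fusion 2-category $\fB$, invertible $\fB$-module 2-categories compatible with the braiding are controlled by $\cZ_{(2)}$-type data. For $\tSVect$, which is symmetric, we will use the equivalence $\sPic^{\br}(\fB)\simeq \cZ(\fB)^\times$ restricted appropriately; concretely, $\cZ(\tSVect)$ is known to be the non-degenerate braided fusion 2-category $\mathcal{S}$ (the fermionic toric code), whose invertible objects and morphisms we can list. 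First we determine the higher homotopy groups: $\pi_3=\mathbb{C}^\times$ is the group of invertible scalars (automorphisms of the identity 1-morphism of the unit), as for every fusion 2-category. Next, $\pi_2$ is the group of invertible objects of $\Omega\tSVect=\SVect$ that are transparent, namely $\{1,\psi\}=\Z_2$. For $\pi_1$, we use invertible braided-compatible autoequivalences of the identity module, i.e.\ invertible surface operators in the center up to those coming from $\tSVect$. The Cheshire-type condensation surface of $\psi$ is non-invertible, so the remaining candidate is the invertible magnetic surface (the $\Z_2$ flux loop detecting $\psi$). This gives $\pi_1=\Z_2$.

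For $\pi_0$ we count equivalence classes of invertible $\tSVect$-module 2-categories with braided compatibility. By the analogue of \cite[Proposition 3.4.1]{Decoppet:2023bay} used above for $\tVect^\beta_E$, every invertible bimodule comes from a (monoidal) autoequivalence, which reduces the task to $\pi_0$ of braided autoequivalences of $\cZ(\tSVect)$ that fix $\tSVect$. The nontrivial class should be the one that twists the fermion sector by the Kitaev/$Sq^2$-type invertible phase; equivalently, it is the generator of $\pi_0$ matching the nontrivial element of the relevant low-degree spin-bordism/supercohomology group. We therefore expect $\pi_0=\Z_2$. As a cross-check, the resulting classification of $E$-graded braided extensions via $B^2\sPic^{\br}(\tSVect)$ should reproduce the data $(\kappa,\varsigma)$ of Theorem~\ref{thm:mainB}, with $\kappa\in\rH^2(B^2E;\Z_2)$ coming from $\pi_0$, and with the $\Z_2,\Z_2,\mathbb{C}^\times$ layers assembling into $\SH^{5+\kappa}$. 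In particular, for $E=\Z_2$ it should give the $\mathcal{S}$ versus $\mathcal{T}$ dichotomy, and this serves as a consistency test of the table.

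The main obstacle is $\pi_0$ (and, relatedly, confirming that $\pi_1$ is exactly $\Z_2$). These require controlling invertible module 2-categories over $\tSVect$, and one must rule out extra exotic invertible modules. I would first try to reduce to known results on the Picard 3-groupoid of $\tSVect$ together with the fiber sequence relating $\sPic^{\br}$, $\sPic$, and braided autoequivalences, and then compare with the computation of $\mathrm{Aut}^{\br}(\cZ(\tSVect))$.
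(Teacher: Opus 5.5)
There is a genuine gap: you compute in the wrong space. By Eq.~\eqref{eq:picbrequiv}, $\sPic^{\br}(\tSVect)=\cZ(\Mod(\tSVect))^\times=\cZ(\mathbf{3SVect})^\times$ is the invertible part of a braided fusion \emph{3}-category. Your phrase ``braided invertible module 2-categories'' names this correctly, but you then replace it by $\cZ(\tSVect)^\times\simeq\mathcal S^\times$. That is the invertible part of a braided fusion 2-category, hence a 2-type with only $\pi_0,\pi_1,\pi_2$. Your degree assignments are inconsistent with that model. Scalars, as automorphisms of the identity 1-morphism of the unit in a 2-category, give $\pi_2$ and not $\pi_3$. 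The invertible lines $\{1,\psi\}\subset\Omega\mathcal S=\SVect$ give $\pi_1$ and not $\pi_2$.

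The correct input, which the paper uses, is $\Omega\cZ(\mathbf{3SVect})\simeq\cZ_{(2)}(\tSVect)\simeq\tSVect$. This is the sylleptic center, not the Drinfeld center. It makes $\pi_1,\pi_2,\pi_3$ of $\sPic^{\br}(\tSVect)$ equal to $\pi_0,\pi_1,\pi_2$ of $\tSVect^\times=\Z_2\cdot B\Z_2\cdot B^2\mathbb{C}^\times$. In particular, $\pi_1$ is generated by the $\psi$-condensation surface $\Mod_{\SVect}(\mathrm{Cl}_1)$. Contrary to your claim, that surface is invertible, because $\mathrm{Cl}_1\otimes\mathrm{Cl}_1\cong\mathrm{Cl}_2$ is Morita trivial in $\SVect$. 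The magnetic flux surface of $\mathcal S$ that you use is not an object of $\tSVect$ at all, so your $\pi_1=\Z_2$ is right only by coincidence.

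For $\pi_0$ you give no argument. You only ``expect'' $\Z_2$, and the check against $(\kappa,\varsigma)$ in Theorem~\ref{thm:mainB} is circular, since those data are read off from this lemma. The reduction you sketch also goes the wrong way. Invertible $\tSVect$-module 2-categories are all trivial; this is the vanishing of $\pi_0\sPic$ used later via \cite[Lemma 3.17]{Decoppet:2025eic}. So the nontrivial element of $\pi_0\sPic^{\br}(\tSVect)$ has trivial underlying module. It is carried entirely by the central (half-braiding) structure in $\cZ(\mathbf{3SVect})$, not by a braided autoequivalence of $\mathcal S$.

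The missing idea is nondegeneracy of $\cZ(\mathbf{3SVect})$. The fermion line generating $\pi_2$ must be detected by braiding with some invertible 3-dimensional object, and that object cannot lie in the identity component. This higher $S$-matrix pairing between components and lines is what yields $\pi_0\cong\Z_2$.
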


By Equation~\eqref{eq:picbrequiv}, the homotopy groups of $\sPic^{\br}(\tSVect)$ are the homotopy groups of $\cZ(\mathbf{3SVect})^\times$. In order to obtain the homotopy groups in degree $\geq 1$ we note that $\Omega \cZ(\mathbf{3SVect})\cong \cZ_{(2)}(\tSVect)\cong \tSVect$, where the first equivalence is in \cite[Section IV.B]{JF} and the second holds by inspection. The homotopy groups of $\tSVect^\times$ are exactly the ones shown in degrees $\geq 1$.
By nondegeneracy of $\cZ(\mathbf{3SVect})$, we find that $\pi_0(\cZ(\mathbf{3SVect})^\times)\cong \Z/2$, by a higher $S$-matrix argument. The idea is that there are two components, and the 3d object represented by one of the components is detected by a line operator in the other component. 

We now consider maps $B^2 E \rightarrow B^2 \sPic^{\br}(\tSVect)$.  We break up the data of these maps as follows: first, we consider maps $\kappa:B^2 E \rightarrow \pi_2( B^2 \sPic^{\br}(\tSVect))$, represented by a class in $\rH^2(B^2 E, \Z_2)$. This twists the maps  $B^2E \rightarrow \pi_{\geq 3}( B^2 \sPic^{\br}(\tSVect))\simeq \SH^5$. Secondly, we have a map $B^2 E \rightarrow \SH^5$ compatible with $\kappa$. This is precisely the data of a class in $\SH^{5+\kappa}(B^2 E)$. The following result encapsulates the data of Items \ref{item:B1} and \ref{item:B2} in Theorem \ref{thm:mainB}:
\begin{corollary}
\label{cor:BFSF2C}
Braided fermionic strongly fusion 2-categories are  classified by a finite abelian group $E$, a class $\kappa \in \rH^2(B^2 E;\Z_2)$, and a class $\varsigma  \in \SH^{5+\kappa}(B^2 E)$.
\end{corollary}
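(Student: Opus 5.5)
The plan is to show that giving a braided fermionic strongly fusion 2-category is the same as giving a faithfully $E$-graded braided extension of $\tSVect$, and then to compute the set of such extensions by obstruction theory, using the homotopy groups of Lemma~\ref{eq:piPicbr}.

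First, let $\fB$ be braided fermionic strongly fusion. Then $\Omega\fB\simeq\SVect$, so every simple surface operator is invertible, because fusion 2-categories with no nontrivial lines are grouplike \cite{JFY}. The group of connected components of simple objects is a group $E$, and it is abelian because of the braiding. The identity component is $\tSVect$, so $\fB$ is a faithfully $E$-graded braided extension of $\tSVect$. Conversely, any such extension is braided fermionic strongly fusion.

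By Proposition~\ref{prop:braidedextn}, equivalence classes of these extensions correspond to homotopy classes of maps $B^2E\to B^2\sPic^{\br}(\tSVect)$. These are pointed maps, taken up to the appropriate identification.

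Next I would analyze the target space $X=B^2\sPic^{\br}(\tSVect)$ through its Postnikov tower. By Lemma~\ref{eq:piPicbr}:
\begin{itemize}
\item $\pi_2 X=\Z_2$,
\item $\pi_3 X=\Z_2$,
\item $\pi_4 X=\Z_2$,
\item $\pi_5 X=\mathbb{C}^\times$.
\end{itemize}
The $\pi_0$ factor contributes $\pi_2 X$, since $\pi_0$ of the Picard 3-groupoid delooped twice sits in degree 2.

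A map from $B^2E$ first projects to the bottom stage $K(\Z_2,2)$. This projection is a class $\kappa\in\rH^2(B^2E;\Z_2)$. The remaining data is a lift along the fibration whose fiber is the 2-connected cover $X\langle 3\rangle$, with homotopy $\Z_2,\Z_2,\mathbb{C}^\times$ in degrees $3,4,5$.

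The key identification is that $X\langle 3\rangle$ is the spectrum-level space representing supercohomology $\SH^5$. This space has exactly these homotopy groups, and its $k$-invariants are $\Sq^2$ and the nontrivial extension encoding fermionic framing. I would verify this by noting that $X\langle3\rangle\simeq B^2(\tSVect^\times)$, where $\tSVect^\times$ is the Picard groupoid of invertibles in $\tSVect$, since $\Omega\cZ(\mathbf{3SVect})\simeq\tSVect$. The invertible part of $\tSVect$ is the standard model for (shifted) supercohomology, as in the definition recalled in Appendix~\ref{subsection:fermionicbackground}.

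The $k$-invariant linking the degree-2 class to the rest defines a twisting. Lifts of $\kappa$ therefore form exactly the twisted supercohomology $\SH^{5+\kappa}(B^2E)$, which is the Debray-type twisted definition, once I check that the fibration $X\langle3\rangle\to X\to K(\Z_2,2)$ is the one classifying $w_2+B^*\kappa$ twisting.

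Finally, I would check that equivalences of extensions match the equivalence relation in the statement, which is isomorphisms of $E$ together with homotopy of maps. I would also check that lifts with the same $\kappa$ are counted correctly, meaning that different choices of null-homotopy are absorbed into the twisted cohomology group.

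The main obstacle is identifying the $k$-invariants of $X$ precisely. One needs both that the fiber over $K(\Z_2,2)$ is the supercohomology space and that the attaching map is the twist by $\kappa$, rather than some other twist. I would first test this on $E=\Z_2$ with nontrivial $\kappa$. That case should reproduce the two known non-degenerate theories $\mathcal{S}$ and $\mathcal{T}$, i.e.\ $\SH^{5+\kappa}(B^2\Z_2)=\Z_2$. Only after that check would I argue the general case by naturality in $E$.
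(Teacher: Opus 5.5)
Your proposal follows the paper's route: identify these categories with faithfully $E$-graded braided extensions of $\tSVect$, apply Proposition~\ref{prop:braidedextn}, read off $\kappa$ from the bottom Postnikov stage $\pi_2=\Z_2$ of $B^2\sPic^{\br}(\tSVect)$, and identify the remaining lift with $\SH^{5+\kappa}(B^2E)$. The paper also asserts this twist identification rather than deriving it, so flagging it as the main obstacle is appropriate. One small correction: the 2-connected cover is $B^3(\tSVect^\times)$, not $B^2(\tSVect^\times)$, which matches the paper's definition of $\SH^{n+\kappa}$ via $B^{n-2}\tSVect^\times$ at $n=5$.
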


In the case when $E = \Z_2$ and $\kappa$ is trivial, we have that $\SH^{5}(B^2 E) = 0$. The corresponding braided fermionic strongly fusion 2-category is $\mathbf{2SVect}_{\mathbb{Z}/2}$.
On the other hand, when $\kappa$ is the nontrivial class in $\rH^2(B^2 \mathbb{Z}_2;\Z_2)\cong\mathbb{Z}/2$, we have $\SH^{5+\kappa}(B^2 \Z_2)\cong\mathbb{Z}_2$. These two classes correspond to the nondegenerate braided fermionic fusion 2-categories denoted $\mathcal S = \mathcal{Z}(\mathbf{2SVect})$ and $\mathcal T$ in \cite{JF2,JFR}, the latter of which is not a Drinfeld center but has an anomaly given by $\int w_2 w_3$. In Sec. \ref{subsec:singlefermion} the (3+1)D fermionic toric code is equivalent to $\mathcal{S}$, and the decoherence gives the imTO $\tSVect$, with the 2-form symmetry generated by the fermion in $\Omega \tSVect\cong \SVect$.

Let $\fB=\tSVect^{(\kappa,\varsigma)}_{E}$ denote the braided fermionic strongly fusion 2-category corresponding to the data in the corollary above. Similar to the all bosons case in Section \ref{subsection:allboson}, consider an
 $H$-action on $\fB$  given by a functor $B H\to B \sAut^{\br}(\fB)$, where $G\twoheadrightarrow H$ and $N$ is the kernel of the surjection.  Let us first consider the case when  $H$ is trivial so that $N=G$, and we are solely classifying faithfully $N$-graded $N$-crossed braided extensions of $\fB$. Such categories are parametrized by homotopy classes of maps  $B N \to B\sPic(\tSVect^{(\kappa,\varsigma)}_{E})$ and understanding these maps will unpack in a more explicit manner the data of Item \ref{item:BmaptoPic} in Theorem \ref{thm:mainB}.

We begin by observing that the space $B\sPic(\tSVect^{(\kappa,\varsigma)}_{E})$ fits in the following fiber sequence
\begin{equation}\label{eq:pic2svectpullback}
B\sPic(\tSVect^{(\kappa,\varsigma)}_{E})\rightarrow B^2E\xrightarrow{(\kappa,\varsigma)}B^2\sPic^{\br}(\tSVect).
\end{equation}

   We consider first the group  $\pi_1(B\sPic(\tSVect^{(\kappa,\varsigma)}_{E}))$, which parametrize invertible $\tSVect^{(\kappa,\varsigma)}_{E}$-module 2-categories. By \cite[Lemma 3.17]{Decoppet:2025eic}, this group is trivial, hence we only have to look at maps $B N \rightarrow B \sPic(\tSVect^{(\kappa,\varsigma)}_{E})$ that land in homotopy groups greater than 1. By using the fiber sequence in Equation \eqref{eq:pic2svectpullback} we see that homotopy classes of maps $B N\rightarrow B\sPic(\tSVect^{(\kappa,\varsigma)}_{E})$ correspond to homotopy classes of maps $\tau: B N \rightarrow B^2 E$ together with a trivialization of the composite $\varsigma \circ \tau$, i.e. giving the lift in the following diagram:
\begin{equation}
\begin{tikzcd}
    B\sPic(\tSVect^{(\kappa,\varsigma)}_{E}) \arrow[r]& B^2 E \arrow[r,"\varsigma"]& B^2\sPic^{\br}(\tSVect) \\
    & B N\,. \arrow[u,"\tau"] \arrow[lu,dotted]&
\end{tikzcd}
\end{equation}

Such trivializations form a torsor over the group $\SH^{4+\kappa\circ\tau}(B N)$. In summary, Item \ref{item:BmaptoPic} amounts to the following pair of data:
    \begin{enumerate}
        \item A class $\tau\in \rH^2(B N;E)$, such that $\varsigma \circ \tau$ is trivial in $\SH^{5+\kappa\circ\tau}(B N)$,
        \item A class $\varpi \in \SH^{4+\kappa\circ\tau}(B N)$.
    \end{enumerate}

  We now consider when $H$ is nontrivial, so that there is a potentially anomalous weak symmetry. Then the  final two pieces of data in Theorem \ref{thm:mainB} has an interpretation analogous to what was given in Section \ref{subsection:allboson}.

\begin{remark}
    The theory $\tSVect^{(\kappa,\varsigma)}_E$ 
    parallels the theory  $\tVect^\beta_E$ that was used to construct mTOs. More generally, for any group $E$ and $\kappa\neq 0$, $\SH^{5+\kappa}(B^2 E)\cong \Z_2$, and the contribution from this group lies completely in the row of the spectral sequence with $\mathbb{C}^\times$ coefficients. In contrast to Sec.~\ref{subsection:allboson}, pure-state TOs with an emergent fermion do \emph{not} arise from trivial $E$: $\mathbf{2SVect}$ itself is degenerate, as the fermion is transparent. Rather, they arise from equivariantizing faithfully $G$-graded $G$-crossed braided extensions of $\mathcal{S}=\mathbf{2SVect}^{(\kappa,0)}_{\mathbb{Z}_2}$~\cite{Kong:2024ykr}, while extensions of $\mathcal{T}$ yield imTOs.
\end{remark}

\begin{remark}
    Unlike the description with emergent fermion excitations, when the topological order has local fermions then it is genuinely a fermionic theory and requires a spin structure rather than a weaker twisted spin structure. Categorically, these theories are realized as $\tSVect$-central braided fusion 2-categories i.e. a braided fusion 2-category $\fB$ with a sylleptic monoidal 2-functor  $\tSVect \to \cZ_{(2)}(\fB)$.
    We can realize these categories also from equivariantization of a $G$-action. Starting with a $G$-crossed strongly braided fusion 2-category $\fD = \boxplus_{g\in G}\fD_g$, we additionally require that there is a $G$-equivariant sylleptic monoidal 2-functor $\tSVect \to \cZ_{(2)}(\fD_e)$, where $\tSVect$ carries the trivial $G$-action.
\end{remark}

\subsection{Examples}
\label{subsection:egs}

We now supply a few examples of how the data of the parametrization of braided fusion 2-categories can be extracted from mTOs. Consider the theory in Sec.~\ref{subsec:nonabelianfloop}. The data in Theorem~\ref{thm:mainA} corresponds to:
\begin{itemize}
    \item The finite group $E = \Z_2 \times \Z_2$;
    \item The class $\beta \in\rH^5(B^2 E; \mathbb{C}^\times)$ is given by $\beta = \pi (B Sq^1 B + B' Sq^1 B')$ using the 2-form $\Z_2$ backgrounds $B, B'$.
    \item The group $N$ is trivial. The group $G=H = \Z_2$, acts on $\fC = \tVect^\beta_E$ by swapping the two $\Z_2$ generators. Namely, it acts on the $\Z_2$ backgrounds by swapping $B\leftrightarrow B'$.
\end{itemize}

We now explain how to produce the mTO defined via the category of 2-representations of a 2-group. While this is harder to realize from the point of view of a lattice model, the classification data allows us to pick out this theory. For the data of Theorem \ref{thm:mainA} we take $N$ to be trivial, so that $G=H$, but $E$ to be nontrivial and with $\beta$ trivial. We consider the action $\rho: B H \rightarrow  \mathrm{B}\mathscr{A}ut^{\br}(\mathfrak{B})$.  The homotopy groups of $\sAut^{\br}(\fB)$ for $\fB = \tVect_E$ are given by:
\begin{equation}\label{eq:groupsAutB}
    \begin{tabular}{|c|c|c|c|c|}
\hline
$\pi_0$ & $\pi_1$ & $\pi_2$ & $\pi_3$\\
\hline 
$\mathrm{Aut}(E)\ltimes \rH^4(B^2 E;\mathbb{C}^\times)$ & $0$ &  $\rH^2(B^2 E;\mathbb{C}^\times)$ & $0$ \\
\hline 
\end{tabular}\vspace{4mm}\,\,.
\end{equation}
The second entry is explained by $\pi_1(\sAut^{\br}(\fB))\cong\rH^3(B^2 E; \mathbb{C}^\times) = 0$.
An action $\rho: B H \rightarrow  B\mathscr{A}ut^{\br}(\mathfrak{B})$ therefore includes the data of a group homomorphism $H\rightarrow \mathrm{Aut}(E)\ltimes \rH^4(B^2 E;\mathbb{C}^\times)$. Assuming that this homomorphism factors through $\mathrm{Aut}(E)$, the only remaining data in $\rho$ is a class $\Theta \in \rH^3(B G;\underline{\widehat{E}})$, parameterizing maps from $B G$ to  $\pi_2(\sAut^{\br}(\fB))= \rH^2(B^2 E; \mathbb{C}^\times)\cong\widehat{E}$. 
We use the underline to denote that the fact that the
 coefficients are twisted by the homomorphism $G\rightarrow \mathrm{Aut}(E)$.  
The class $\Theta$ can be thought of as the obstruction to symmetry fractionalization of $G$ on $\widehat{E}$. Taking the equivariantization of the corresponding $G$-crossed braided fusion 2-category built from $\rho$ produces the braided fusion 2-category $\mathbf{2Rep}(\mathcal{G})$ of 2-representations of the 2-group $\mathcal G$ classified by $G$, $\widehat{E}$, and the class $\Theta$.

\section{Conclusions and Future Directions}
\label{sec:discussion}

In this paper, we have developed a general framework for constructing and characterizing mixed-state topological order in (3+1)D. Motivated by lattice constructions, we have proposed an ansatz under which the topological operators of a (3+1)D mTO organize into a braided fusion 2-category, generalizing the proposed correspondence between (2+1)D mTO and non-modular braided tensor categories. Within this ansatz, we have provided a general recipe, built from an Abelian gauge theory subjected to decoherence and followed by weak/strong 0-form gauging, whose output exhausts all bosonic braided fusion 2-categories (including those with emergent fermions), and we have implemented it explicitly on the lattice for representative examples. Every such category hence admits a mixed-state realization along the lines we have described. We have further identified when these categories describe intrinsically mixed-state order: either the transparent operators carry a nontrivial anomaly, or the category is non-degenerate but carries the gravitational anomaly $\pi\int w_2w_3$, as for the FcFl theory (crucially, degeneracy alone does not suffice). Neither is expected in the ground state of a gapped local Hamiltonian in three spatial dimensions. This gives a proposed categorical criterion for distinguishing imTO from mTO in (3+1)D.

Our work raises a number of open questions and future directions. The most pressing open question is the extent to which the data of Theorems~\ref{thm:mainA} and~\ref{thm:mainB} furnish a classification of (3+1)D mixed-state topological order, rather than a means of construction. Our central ansatz asserts that the operators of an mTO organize into a braided fusion 2-category, but we have not shown that every operator algebra realizing a given category must arise this way, nor have we addressed when two sets of data $(E, \beta, N, H, \rho)$ (or their fermionic counterparts $(E,\kappa,\varsigma, N, H, \rho)$) describe the same mixed-state phase i.e., are related by finite-depth local quantum channels acting in both directions. Settling this would require fully understanding the equivalence relation that two-way channel connectivity induces on the categorical data, alongside a complete classification of braided fusion 2-categories in their own right. 

One can also ask a more restricted question. Rather than classifying all mTOs consistent with Ansatz~\ref{ansatz:mTO}, one could aim to classify only those that arise from decohering a (3+1)D topological order. Here, one starts from a pure-state topological order and views local decoherence as the process that can remove remote detectability. Physically, this amounts to classifying pairs of a parent pure state and a noise channel up to a suitable notion of equivalence; categorically, it should correspond to classifying the braided fusion 2-categories of Theorems~\ref{thm:mainA} and~\ref{thm:mainB} up to braided and monoidal equivalence, presumably a more tractable problem than the unconstrained classification discussed above. Relatedly, while our construction produces all braided fusion 2-categories at the level of categorical data, explicit lattice realizations for general data (for instance, arbitrary loop anomalies $\beta\in 
\rH^5(B^2E;\mathbb{C}^\times)$, or weak gauging with anomalous actions) remain to be constructed.

More broadly, we can relax the conditions on the ground states from which our construction starts. In (2+1)D, gapped ground states of local Hamiltonians are expected to realize only ordinary topological order; in (3+1)D, by contrast, gapped ground states can also realize fracton order, which lies outside the conventional TQFT paradigm altogether. Decohering a fracton phase can likewise generate new forms of mixed-state order: since braiding non-degeneracy is expected to hold even for fracton phases, mixed-state fracton orders that violate this principle will encode intrinsically mixed forms of fracton order. Formalizing this first requires an algebraic theory of braiding and fusion for fracton phases, on which there has been partial recent progress~\cite{song2019twist,song2024self,wickenden2026}. Separately, rather than obtaining mTOs by decohering a pure ground state, one may ask whether (3+1)D (i) mTOs can instead arise as steady states of open-system dynamics, extending recent work on steady-state SPT and topological order~\cite{chirame2024spt,chirame2024to,balasubramanian2024local}.

A further natural question concerns chirality. As recently emphasized in the $(2+1)$D setting~\cite{sun2026chiral,ellison2026chiral}, mixed states that appear chiral under certain definitions of chirality can fail to be chiral under other definitions, and the usual diagnostics of chirality for pure-state topological order do not carry over unmodified to mTOs. This question is sharpened in (3+1)D by the fact that, while (3+1)D topological orders always admit a gapped boundary, this need not be true for a general mTO: since a generic braided fusion 2-category need not arise as the center of any fusion 2-category, it need not admit a gapped boundary at all. Understanding chirality for mTOs in a precise physical sense, and, relatedly, classifying which mTOs admit boundaries at all (gapped or otherwise) and what data characterizes them, will require ideas beyond those available for pure states.

Finally, it is worth revisiting notions of orientability for mixed-state theories. For pure-state (3+1)D topological order, framed theories correspond to nondegenerate braided fusion 2-categories, while oriented theories require the additional data trivializing the associated anomaly (the categorical analogue of the distinction between nondegenerate braided fusion categories and modular tensor categories in (2+1)D). Since decoherence deforms what is a priori a framed pure-state theory, it is natural to ask whether the resulting mTOs should likewise be regarded as framed, and if so, what plays the role of a framing anomaly in this setting. Formalizing the notion of a mixed-state TQFT along the lines of Ref.~\cite{ziniwang} and using it to settle this question would help place mTOs on the same functorial footing as their pure-state counterparts.

\begin{acknowledgments}
We thank Thibault Décoppet for helpful discussions. RK is supported by the Department of Applied Physics, the University of Tokyo. AP is supported by the U.S. Department of Energy, Office of Science, Office of Advanced Scientific Computing Research, through the Exploratory Research for Extreme Scale Science (EXPRESS) program under Award No. DE-SC0026216. MY is supported by the EPSRC Open Fellowship EP/X01276X/1, and the Fund for Natural Science and the Adler Family Fund. We thank the Galileo Galilei Institute for Theoretical Physics for hosting the program “Defects and Extended Excitations in Quantum Field Theory, Quantum Matter and Statistical Models” in 2026, during which part of the work was completed. MY would like to thank the Institute for Advanced Study for hosting him during a visit where he learned about mixed-state topological order.
\end{acknowledgments}


\bibliography{bibliography}

\appendix

\setcounter{theorem}{0}
\setcounter{defn}{0}
\renewcommand{\thetheorem}{\Alph{section}\arabic{theorem}}
\renewcommand{\thedefn}{\Alph{section}\arabic{defn}}
\makeatletter
\def\theHtheorem{app.\Alph{section}.\arabic{theorem}}
\def\theHdefn{app.\Alph{section}.\arabic{defn}}
\makeatother


\section{Topological Order and Fusion 2-Categories}
\label{appendix:F2C}

In order to describe the operators of a  (3+1)D mTO, or pure-state topological order, we use the language of fusion 2-categories. A fusion 2-category categorifies a fusion 1-category, the latter which is relevant in the study of  symmetries and topological order in (2+1)D TOs. In recent years, the study of fusion 2-categories has led to advancements of our understanding of many important areas including: the study of finite symmetries in (2+1)D, (3+1)D and (4+1)D topological order \cite{PhysRevX.8.021074,JFR,JFY2,Decoppet:2022dnz,Debray:2023iwf,Bhardwaj:2024qiv,Wen:2024qsg,3TY,Bhardwaj:2025piv,Schafer-Nameki:2025fiy,Wen:2025thg,TY25}, anomalous (3+1)D symmetry enriched topological order, non-invertible duality symmetry in (3+1)D, and how to prove the existence of a minimal non-degenerate extension for braided fusion 1-categories. Furthermore, their classification has been given in \cite{F2C}, with the main result revealing that the purely 2-categorical part is easier to deal with than one might think: it just reduces to group theoretical and cohomological data. In this Appendix, we introduce the mathematical formalisms and background regarding both bosonic and emergent fermionic braided fusion 2-categories. For the purposes of conveying physical intuition, we start with only bosonic fusion 2-categories and we relegate the discussion of how the cohomological data changes when there is an emergent fermion in the spectrum for Sec~\ref{subsection:fermionicbackground}.

\subsection{Details on Fusion Categories}
\label{subsection:defs}
We give a quick account of the details and terminology associated to  fusion $n$-categories (over $\mathbb{C}$). For the reader that wants an account of how to rigorously define fusion $n$-categories in general we recommend \cite[Section 3.1]{3TY}, and \cite{GJF,Kong:2024ykr,Stockall:2025rng}.
\begin{itemize}
\item An $n$-category $\mathbf{C}$ is called \textit{locally finite semisimple} if for every object $X \in \mathbf{C}$, $\Omega_X \mathbf{C} := \End_{\mathbf{C}}(X)$ is a multifusion $(n-1)$-category. In particular, every $k$-morphism for $0<k<n$ has both a left and a right adjoint. We will denote the looping at the identity object in $\mathbf{C}$ by $\Omega \mathbf{C}$.

\item An $n$-category $\mathbf{C}$ is called \textbf{semisimple} if it is locally finite semisimple, and it is Cauchy complete in the sense of \cite{GJF}, that is, every $n$-condensation monad splits and direct sums for objects exist.

Cauchy completion is the statement of condensation completion:\ i.e.\ we include all possible $n$-dimensional topological defects that can be constructed by condensing/gauging topological defects living inside an $n$-dimensional topological defect $X$, which forms the multifusion $(n-1)$ category $\Omega_X \mathbf{C}$.

\item An object $X$ of a semisimple $n$-category $\mathbf{C}$ is called \textit{simple} if $\Omega_X \mathbf{C}$ is fusion. 

This means that there are no genuine topological point-like defects on $X$ except for multiples of the identity local operator.
\item Two simple objects are called \textit{connected} if there exists a non-zero 1-morphism between them. If $\mathbf{C}$ is locally finite, being connected defines an equivalence relation on the set of equivalence classes of simple objects of $\mathbf{C}$, and we write $\pi_0(\mathbf{C})$ for the corresponding quotient.

Physically, two $n$-dimensional topological defects are connected if there exists a non-zero topological interface between them. Two topological defects that are connected are also related by condensations, or in other words either of the two defects can be obtained by gauging topological sub-defects living on the other defect. Thus, $\pi_0(\mathbf{C})$ characterizes gauge inequivalent classes of topological $n$-dimensional defects. We note that having multiple objects which live in the  same component is not a feature of fusion 1-categories. This is readily seen, as there are no nontrivial morphisms between anyons of different type in a (2+1)D TO. The next definition categorifies the notion of finiteness.

\item A \textit{finite semisimple $n$-category} is a semisimple $n$-category such that $\pi_0(\mathbf{C})$ is finite. For $n=1$, connected components are exactly (equivalence classes of) simple objects. For $n=2$, every connected component contains only finitely many simple objects~\cite{DR}. 

\item A \textit{multifusion} $n$-category $\mathbf{C}$ is a monoidal finite semisimple $n$-category that is rigid, i.e.\ every object has a left and a right dual.

Physically, such an $n$-category describes topological defects that can be bent into $\cup$ and $\cap$ shapes.
\item A \textit{fusion} $n$-category is a multifusion $n$-category whose monoidal unit is simple.
\end{itemize}

Fusion $n$-categories are useful for describing and classifying topological orders, up to gapped boundary,  due to the axiomatization that topological order in $(n+1)$D are fusion $n$-categories with trivial center, enforcing remote detectability. In terms of braided categories, one can say that $(n+1)$D TOs are given by nondegenerate braided fusion $(n-1)$-categories. 
In (3+1)D, the constrained structure of braided fusion 2-categories makes possible a systematic classification of (m)TOs, whereas the analogous question in (2+1)D seems out of reach. 

\subsection{Braided Fusion 2-Categories}\label{subsection:BF2C}

We now define braided fusion 2-categories, and then specify what we mean by nondegenerate braided fusion 2-category.

\begin{defn}\label{def:BF2C}
    Suppose $\fB$ is a monoidal 2-category with tensor bifunctor $\otimes$, associator $\alpha$, and pentagonator $\pi$. A \textit{braiding} on $\fB$ consists of a natural (in both variables) equivalence 
\begin{equation}\label{eq:braidingF2C}
    b_{x|y}:x \otimes y \to y \otimes x,
\end{equation}
 together with \textit{hexagonators} $R_{(x|-,-)}$ and $S_{(-,-|x)}$ that provide the monoidality of $b$:
 \begin{widetext}
\begin{equation}
\adjustbox{scale=0.75}{
\begin{tikzcd}[column sep = tiny]
                                                                    & (y \otimes x) \otimes z \arrow[r, "\alpha "]     & y \otimes (x \otimes z) \arrow[rd, "{b_{x|z}}"]       &                         \\
(x\otimes y)\otimes z \arrow[rd, "\alpha "] \arrow[ru, "{b_{x|y}}"] &       \hspace{15mm} \Downarrow                   & {\hspace{-15mm}R_{(x|y,z)}}                           & y \otimes( z \otimes x) \\
                                                                    & x \otimes (y \otimes z)  \arrow[r, "{b_{x|y z}}"] & (y \otimes z )\otimes x \arrow[l, phantom] \arrow[ru,"\alpha"] &                        
\end{tikzcd}
},
\quad
\adjustbox{scale=0.75}{
\begin{tikzcd}[column sep = tiny]
                                                                    & (z \otimes y) \otimes x \arrow[r, "{b_{zy|x}} "]     & x \otimes (z \otimes y) \arrow[rd, "\alpha"]       &                         \\
z\otimes (y\otimes x) \arrow[rd, "{b_{y|x}} "] \arrow[ru, "\alpha"] &       \hspace{15mm} \Downarrow                   & {\hspace{-15mm}S_{(z,y|x)}}                           & (x \otimes z) \otimes y \\
                                                                    & z \otimes (x \otimes y)  \arrow[r, "\alpha"] & (z \otimes x )\otimes y\,. \arrow[l, phantom] \arrow[ru,"{b_{z|x}}"] &                        
\end{tikzcd}}
\end{equation}
$R$ and $S$ must solve various equations in \cite[Appendix C.2]{SP}.
\end{widetext}
\end{defn}

\begin{ex}
In Definition \ref{def:BF2C} the data of $\alpha$ and $b$ are automatically trivial
in the bosonic strongly fusion case.
However, there is still data of a 4-cocycle $\pi \in Z^4(B G; \mathbb{C}^\times)$ that one may  use to twist the pentagonator $\tVect_G$. This produces the fusion 2-category $\tVect^\pi_G$. The resulting fusion 2-category depends up to monoidal equivalence only on the class of $\pi$ in the cohomology group $\rH^4(B G;\mathbb{C}^\times)$.  By \cite[Theorem A]{JFY} for $G$ a finite group all bosonic strongly fusion 2-categories arise as $G$-graded extensions of $\tVect$, and take the form $\tVect^\pi_G$.
\end{ex}

 \begin{defn}
     The triple $(\pi,R,S)$ consisting of pentagonator $\pi$ and hexagonators $R$ and $S$, is the data of a \textit{braided cocycle}: they define a class in the Eilenberg--Mac Lane cohomology
\begin{equation}\label{eq:cohobraided}
 [\pi,R,S] \in \rH^4_{\mathrm{br}}(E;\mathbb{C}^\times) := \rH^5(B^2 E;\mathbb{C}^\times),
\end{equation}
for an abelian group $E$. 
 \end{defn}

For a given braided fusion 2-category, in order to define what we mean by nondegenerate, and for the purposes of our subsequent discussion, we introduce the sylleptic center for braided monoidal 2-categories, which categorifies the symmetric center of a braided monoidal
1-category.

\begin{defn}[\cite{Cr}]\label{def:syllepticcenter}
    The sylleptic center of a braided fusion 2-category $\fB$, denoted by $\cZ_{(2)}(\fB)$, is the sylleptic fusion 2-category defined as follows
    \begin{itemize}
        \item Objects given by a pair $(A,v_{A,-})$ where $A$ is an object in $\fB$ and $v_{A,-}$ is an invertible modification:
        \begin{equation}
            \begin{tikzcd}[sep=tiny]
                A\otimes X \arrow[rr,"\cong"] \arrow[rdd,"b_{A|X}",swap]&  &A\otimes X\\
               & {\Downarrow v_{A||X}} & \\
                & X\otimes A \arrow[ruu,"b_{X|A}",swap]&
            \end{tikzcd}
        \end{equation}
        \item The 1-morphisms between $(A,v_A)$ and $(A',v_{A'})$ are 1-morphisms $f:A \to A'$ in $\mathfrak{B}$ such that the following diagram commutes:
         \begin{equation}
            \begin{tikzcd}[sep=tiny]
                A\otimes X \arrow[rr,"\cong"] \arrow[ddd]\arrow[rdd,"b_{A|X}",swap]&  &A\otimes X \arrow[ddd]\\
               & {\Downarrow v_{A||X}} & \\
                & X\otimes A \arrow[ruu,"b_{X|A}",swap] \arrow[ddd]& \\
                A'\otimes X \arrow[rr] \arrow[rdd,"b_{A'|X}",swap]& & A'\otimes X \\
                 & {\phantom{aaa}\Downarrow v_{A'||X}} & \\
                 & X\otimes A' \arrow[ruu,"b_{X|A'}",swap]&
            \end{tikzcd}
        \end{equation}
        \item The 2-morphisms are exactly the 2-morphisms in $\mathfrak{B}$.
    \end{itemize}
\end{defn}

We say that a braided fusion 2-category $\fB$ is nondegenerate if $\cZ_{(2)}(\fB)\cong \tVect$. In the case when the (3+1)D TO has only bosonic excitations, the theory corresponds to a nondegenerate braided fusion 2-category and the classification of nondegenerate braided fusion 2-categories can be conducted through a 2-categorical version of (de-)equivariantization. We will give further details on this classification in \S\ref{section:deequiv}.

\subsubsection{$G$-Crossed Braided and Braided Extensions for Fusion 2-Categories}\label{subsection:enrichment}
We start with a quick physical recount in (2+1)-dimensions of coupling a topological order to an ordinary ($0$-form) symmetry. A (2+1)d topological order described by a nondegenerate braided fusion category $\mathcal B$ can be coupled to a symmetry group $G$ by introducing topological symmetry defects labeled by elements $g \in G$. Transporting a quasiparticle, represented by an object of $\mathcal B$, around a $g$-defect implements the action of $g$ on its topological charge, which may permute quasiparticle types. This structure is captured by a $G$-crossed braided fusion category
\[
\mathcal A = \bigoplus_{g \in G} \mathcal A_g,
\]
whose identity component is the original category $\mathcal A_e = \mathcal B$. The category $\mathcal A$ is therefore a $G$-crossed braided extension of $\mathcal B$.
From a physical perspective, each sector $\mathcal A_g$ describes a distinct type of topological $g$-defect, and the fusion and braiding properties of these defects, as well as their interactions with ordinary quasiparticles, are governed by the $G$-crossed braided structure.

 One can present a homotopy theoretic classification of faithfully graded $G$-crossed braided fusion 2-categories, similar to the classification of faithfully graded $G$-crossed braided fusion 1-categories obtained in \cite{ENO2}. In order to do so, we introduce the Picard space $\sPic(\fB)$ of a braided fusion 2-category $\mathfrak{B}$. By definition, the space $$\sPic(\fB):=\mathbf{Mod}(\mathfrak{B})^{\times}$$ consists of invertible $\fB$-module 2-categories and invertible $\fB$-module morphisms. 
 
 The homotopy groups of $\sPic(\fB)$ have the following description:
 \begin{equation}
\begin{tabular}{|c|c|c|c|c|}
\hline
$\pi_0$ & $\pi_1$ & $\pi_2$ & $\pi_3$ \\
\hline 
$Pic(\fB)$ & $Inv(\fB)$ & $Inv(\Omega\fB)$ &  $\mathbb{C}^{\times}$ \\
\hline
\end{tabular}
\end{equation}
 where $Inv(-)$ denotes the invertible elements.
 This space is a group-like topological monoid under the relative tensor product, and may therefore be delooped to the space $B \sPic(\fB)$.  Just as in the case of braided fusion 1-categories, homotopy classes of maps from $B G \rightarrow B \sPic(\fB)$ parametrize faithfully $G$-crossed braided extensions of $\fB$ by the following theorem, which generalizes the results of \cite{ENO2}: 
 
\begin{theorem}\label{thm:crossedextn}
Let $\fB$ be a braided fusion 2-category. Faithfully graded $G$-crossed braided extensions of $\fB$ are classified by homotopy classes of maps $B G \rightarrow B \sPic(\fB)$.
\end{theorem}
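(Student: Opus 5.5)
The plan is to categorify the proof of the corresponding statement for braided fusion 1-categories in \cite{ENO2}, replacing module categories by module 2-categories. We build a map from faithfully graded $G$-crossed braided extensions of $\fB$ to maps $BG\to B\sPic(\fB)$, then construct an inverse.

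\emph{From extensions to maps.} Let $\fC=\boxplus_{g\in G}\fC_g$ be a faithfully graded $G$-crossed braided extension with $\fC_e=\fB$.
\begin{enumerate}
\item Each component $\fC_g$ is a $\fB$-bimodule 2-category, and faithfulness together with rigidity show that it is invertible. Concretely, the fusion gives an equivalence $\fC_g\boxtimes_{\fB}\fC_{g^{-1}}\simeq\fB$; one checks this via the 2-categorical Morita theory of \cite{Decoppet:2023bay}.
\item The associator and pentagonator of $\fC$ make $g\mapsto\fC_g$ a monoidal 3-functor $\underline{G}\to\mathbf{BrPic}(\fB)$, i.e.\ a map $BG\to B\sBrPic(\fB)$. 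This is the faithfully graded \emph{fusion} extension classification, which I take as known from the fusion 2-category literature.
\item The $G$-crossed braiding supplies the remaining structure. The half-braidings $b_{x|y}$ for $x\in\fB$, $y\in\fC_g$, together with their hexagonators, exhibit the right $\fB$-action on $\fC_g$ as the left action twisted by the braiding of $\fB$.
\item Consequently each $\fC_g$ lies in the image of the 3-functor $\mathbf{Mod}(\fB)\to\mathbf{BiMod}(\fB)$ induced by the braiding. The whole map therefore factors through $B\sPic(\fB)$, where $\sPic(\fB)=\mathbf{Mod}(\fB)^\times$ with its monoidal structure coming from the braiding.
\end{enumerate}

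\emph{From maps to extensions.} Given $BG\to B\sPic(\fB)$, set $\fC=\boxplus_g\fC_g$ with the one-sided modules viewed as bimodules via the braiding. Fusion is the relative tensor product, and the $G$-action together with the crossed braiding come from the coherence data of the map, after identifying $\sPic(\fB)$ with $\sBrPic$-data equipped with a trivialization of the conjugation action.

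The cleanest packaging of both directions is the equivalence $B\sPic(\fB)\simeq\cZ(\Sigma\fB)^\times$-type identification, analogous to \eqref{eq:picbrequiv}. Under it, a $G$-crossed braided extension is the same thing as a $G$-graded extension of the fusion 3-category $\Sigma\fB$ that is compatible with its centre. One then uses the homotopy-theoretic classification of graded extensions of fusion 3-categories, maps $BG\to B\mathbf{BrPic}(\Sigma\fB)$. Restricting to those fixing the canonical centre yields exactly maps into $B\sPic(\fB)$. The two constructions are inverse up to equivalence: one checks that homotopies of maps correspond to graded braided equivalences restricting to the identity on $\fB$.

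I expect the main obstacle to be the coherence bookkeeping in the crossed-braiding step. This means showing that the data of hexagonators $R,S$ on all graded components is precisely equivalent to $E_1$-monoidal coherence of the map into $\sPic(\fB)$, rather than only $\sBrPic(\fB)$, at every level: $\pi_1$ through $\pi_3=\mathbb{C}^\times$. I would first try to avoid explicit cocycles by working throughout in the $(\infty,4)$-categorical language of \cite{JF,Decoppet:2023bay}, where $G$-crossed braided structure is defined as a $G$-equivariant $E_1$-algebra in $\mathbf{Mod}(\fB)$. In that setting the statement reduces to the universal property of $B(\mathbf{Mod}(\fB)^\times)$ as the classifying space for invertible graded algebras.
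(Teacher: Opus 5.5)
Your outline agrees with the paper's: a graded extension gives a map $BG\to B\sBrPic(\fB)$, and the crossed braiding is used to lift it to $B\sPic(\fB)$. However, the lifting step, which is the whole content of the theorem, is not established.

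In step 4 you deduce a factorization through $B\sPic(\fB)$ from the componentwise fact that each $\fC_g$ is induced from a one-sided module. That inference is invalid, because $B\sPic(\fB)\to B\sBrPic(\fB)$ is not a monomorphism of spaces. For $\fB=\tVect_E$ with $E\neq 0$, on $\pi_3$ it is the map $0=\mathrm{Inv}(\mathbf{Vect})\to\mathrm{Inv}(\Omega\cZ(\tVect_E))=\widehat E$. So a lift is additional structure, not a property of the components, and you must show that the space of lifts is the space of $G$-crossed braidings. Your forward direction does not even extract that structure. It uses only $b_{x|y}$ with $x\in\fB$, and never the $G$-action on $\fB$ or the braidings $y\otimes z\to g(z)\otimes y$ for $y\in\fC_g$, $z\in\fC_h$. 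Nothing in your argument ensures that inequivalent crossed braidings on one graded extension give non-homotopic lifts, or that every lift arises from one.

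The paper fills exactly this gap with two inputs:
\begin{itemize}
\item A $G$-crossed braiding compatible with a $G$-action $BG\to B\sAut^{\br}(\fB)$ is precisely a homotopy between the two composites $BG\to B\sAut^{\br}(\cZ(\fB))$, one through $B\sBrPic(\fB)$ and one through $B\sAut^{\br}(\fB)$.
\item $B\sPic(\fB)$ is the homotopy pullback of $B\sBrPic(\fB)\to B\sAut^{\br}(\cZ(\fB))\leftarrow B\sAut^{\br}(\fB)$, a 2-categorical Davydov--Nikshych square.
\end{itemize}
A map into the pullback is then literally a triple (extension, action, homotopy). This gives both directions and the matching of equivalences at once.

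The fallbacks you suggest do not repair this.
\begin{itemize}
\item The identification $B\sPic(\fB)\simeq\cZ(\Sigma\fB)^\times$ is false. By \eqref{eq:picbrequiv} the right-hand side is the braided Picard space $\sPic^{\br}(\fB)$, which need not be connected: $\pi_0\sPic^{\br}(\tSVect)=\Z_2$ by Lemma~\ref{eq:piPicbr}. By contrast, $B\sPic(\fB)$ is connected.
\item Restricting maps $BG\to B\mathbf{BrPic}(\Sigma\fB)$ cannot yield maps into $B\sPic(\fB)$, because the degrees are off by one. The lowest-level data of the former is a homomorphism from $G$ to invertible $\Sigma\fB$-bimodule 3-categories. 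For the latter it is a homomorphism to invertible objects of $\Sigma\fB$, which live one level higher in $\mathbf{BrPic}(\Sigma\fB)$, and only after choosing half-braidings.
\item Defining a $G$-crossed braided structure as a graded algebra in $\Mod(\fB)$ assumes the theorem. The real work is comparing that notion with the one given by a $G$-action, a crossed braiding and hexagonators, which is what the pullback square accomplishes.
\end{itemize}
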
 

Understanding the maps in Theorem \ref{thm:crossedextn} is indispensable for the classification of pure-state TOs.
We explain what we mean by a faithfully graded $G$-crossed braided extension in three steps, beginning with faithful $G$-graded extensions. Given a fusion 2-category $\mathfrak{C}$, faithfully $G$-graded extensions $\mathfrak{D}=\boxplus_{g\in G}\mathfrak{D}_g$ of $\mathfrak{C}=\mathfrak{D}_e$ are classified by homotopy classes of maps $B G\rightarrow B \sBrPic(\mathfrak{C})$, where $\sBrPic(\mathfrak{C})$  is the space of invertible $\mathfrak{C}$-$\mathfrak{C}$-bimodule 2-categories. We now upgrade from a faithfully $G$-graded extension to a faithfully graded $G$-crossed extension. To do this we supply the data of a $G$-action on $\mathfrak{C}$, and hence an action of $G$ on $\sAut^{br}(\cZ(\mathfrak{C}))$. We endow the composition 
\begin{equation}
    B G\rightarrow B \sBrPic(\mathfrak{C})\rightarrow B\sAut^{\mathrm{br}}(\mathcal{Z}(\mathfrak{C}))
\end{equation}
with a $G$-equivariant structure, with $G$ acting by conjugation. In doing so, this corresponds to
equipping $\mathfrak{D}$ with $G$-action that acts by conjugation on the grading. In the last step, we explain how to upgrade to a faithfully graded $G$-crossed braided extension. Suppose that $\mathfrak{D}=\boxplus_{g\in G}\mathfrak{D}_g$ where $\mathfrak{D}_e = \fB$ where $\fB$ is a braided fusion 2-category. Furthermore, suppose that the $G$-action on $\fD$ is compatible with the braiding on $\fB$. The data of a $G$-action on $\fB$ is given by a map $B G \rightarrow \sAut^{\br}(\fB)$,  which also composes to an action on $\sAut^{\br}(\cZ(\fB))$.
We observe that the data of a $G$-crossed braiding on $\mathfrak{D}$ that is compatible with the braiding and $G$-action on $\mathfrak{B}=\mathfrak{D}_e$ correspond precisely to the data of a homotopy witnessing the commutativity of the following diagram of spaces:
\begin{equation}
\begin{tikzcd}[sep=small]
{\mathrm{B}G} & {\mathrm{B}\mathscr{A}ut^{\mathrm{br}}(\mathfrak{B})}\\
{\mathrm{B}\mathscr{B}r\mathscr{P}ic(\mathfrak{B})} & {\mathrm{B}\mathscr{A}ut^{\mathrm{br}}(\mathcal{Z}(\mathfrak{B})).}
\arrow[from=1-1, to=1-2]
\arrow[from=1-1, to=2-1]
\arrow[Rightarrow, from=2-1, to=1-2, shorten >= 2ex, shorten <= 2ex]
\arrow[from=1-2, to=2-2]
\arrow[from=2-1, to=2-2]
\end{tikzcd}
\end{equation}

By noting that there is a canonical pullback of spaces given by:
\begin{equation}\label{eq:DavydovNikshychpullback}
\begin{tikzcd}[sep=small]
{\mathrm{B}\mathscr{P}ic(\mathfrak{B})} & {\mathrm{B}\mathscr{A}ut^{\mathrm{br}}(\mathfrak{B})}\\
{\mathrm{B}\mathscr{B}r\mathscr{P}ic(\mathfrak{B})} & {\mathrm{B}\mathscr{A}ut^{\mathrm{br}}(\mathcal{Z}(\mathfrak{B})),}
\arrow[from=1-1, to=1-2]
\arrow[from=1-1, to=2-1]
\arrow["\lrcorner"{anchor=center, pos=0.02}, draw=none, from=1-1, to=2-2]
\arrow[from=1-2, to=2-2]
\arrow[from=2-1, to=2-2]
\end{tikzcd}
\end{equation}
we see that homotopy classes of maps $B G \rightarrow B \sPic(\fB)$ give rise to faithfully graded $G$-crossed braided extensions of $\fB$.

In \cite[Section 4]{davydov2021braided}, the authors develop an alternative extension theory specifically adapted to braided fusion 1-categories. More concretely, they classify braided extensions of braided fusion  1-categories in terms of the braided Picard space. In this work, we adopt a 2-categorical analogue of this approach, which likewise makes essential use of the \textit{braided Picard space} (generalizing \cite[Section 5.3]{davydov2021braided})

\begin{equation}
\label{eq:picbrequiv}
    \sPic^{\mathrm{br}}(\fB) := \cZ(\Mod (\fB))^\times. 
\end{equation}
This is a braided group-like topological monoid, so that it can be delooped twice. We now show how the space $B^2 \sPic^{\br}(\fB)$ can be used to classify braided extensions, i.e. construct new braided categories with symmetry.

\begin{prop}
\label{prop:braidedextn}
    Let $\fB$ be a braided fusion 2-category, and $E$ a finite abelian group. Faithfully $E$-graded braided extensions of $\mathfrak{B}$ are classified by homotopy classes of maps $B^2 E \rightarrow B^2 \sPic^{\mathrm{br}}(\mathfrak{B})$.
\end{prop}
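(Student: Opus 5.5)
We will follow the strategy of \cite[Section 4]{davydov2021braided}, categorified one level. The first step translates a faithfully $E$-graded braided extension $\fD=\boxplus_{e\in E}\fD_e$ with $\fD_0=\fB$ into module-theoretic data. Each component $\fD_e$ is an invertible $\fB$-module 2-category, and the braiding of $\fD$ supplies it with a half-braiding. More precisely, the braiding $b_{x|y}$ for $x\in\fB$, $y\in\fD_e$ makes the two $\fB$-actions on $\fD_e$ (left and right) coincide coherently. This exhibits $\fD_e$ as an object of the Drinfeld center $\cZ(\Mod(\fB))$, and invertibility of the grading components makes it an invertible object there. The tensor product $\fD_e\boxtimes_\fB\fD_{e'}\to\fD_{e+e'}$ is compatible with these half-braidings, and the braiding $b_{\fD_e|\fD_{e'}}$ gives the braiding data. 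Together these produce a braided monoidal 3-functor from $E$ (regarded as a discrete braided 3-group) to $\sPic^{\br}(\fB)=\cZ(\Mod(\fB))^\times$, as in \eqref{eq:picbrequiv}. By the standard delooping correspondence for $E_2$-group-like spaces, such braided monoidal maps correspond to pointed maps $B^2E\to B^2\sPic^{\br}(\fB)$.

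The second step is the converse. Given a map $B^2E\to B^2\sPic^{\br}(\fB)$, it loops twice to a braided monoidal functor $E\to\cZ(\Mod(\fB))^\times$. Forgetting to $\Mod(\fB)$, we take the direct sum $\fD:=\boxplus_{e}\fD_e$ of the images, which carries a monoidal structure coming from the relative tensor product. This is exactly the construction in Theorem~\ref{thm:crossedextn} applied with the trivial group action, since the half-braidings trivialize the crossing. The braiding on $\fD$ is then assembled from two pieces: the half-braidings, which handle crossings between $\fB$ and $\fD_e$, and the braided monoidal structure of the functor, which handles crossings between $\fD_e$ and $\fD_{e'}$. The hexagonators $R,S$ of Definition~\ref{def:BF2C} and their coherence equations from \cite{SP} follow from the coherence data of a braided monoidal functor into a braided monoidal 3-category.

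The third step is to check that the two constructions are mutually inverse up to equivalence, so that homotopy classes of maps correspond to equivalence classes of extensions. The cleanest route is to compare with the pullback square \eqref{eq:DavydovNikshychpullback}. A braided extension is the same thing as an $E$-crossed braided extension with trivial $E$-action together with an identification of the crossed braiding with an honest braiding. Correspondingly, $B^2\sPic^{\br}(\fB)$ should be a delooping of $B\sPic(\fB)$ compatible with the forgetful maps, and we would cross-check this against the fiber sequence \eqref{eq:pic2svectpullback} computed earlier.

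The main obstacle is the first step: verifying that the half-braidings and the 2-morphism data, up through the top level $\pi_3=\mathbb{C}^\times$, genuinely assemble into a braided monoidal functor into the 4-categorical center. For this we plan to work with the Morita-invariance statement $\cZ(\Mod(\fB))\simeq$ modules over the Drinfeld center $\cZ(\fB)$, which is available by the 3-categorical Morita theory used in \cite{Decoppet:2023bay}. This replaces the coherence check with the simpler statement that braided $E$-graded extensions are $E$-graded extensions whose components are $\cZ(\fB)$-modules with compatible braiding.
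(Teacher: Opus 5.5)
The paper gives no proof of this proposition beyond presenting it as the 2-categorical analogue of \cite[Sections 4, 5.3]{davydov2021braided}. Your outline (extension $\mapsto$ braided monoidal functor $E\to\cZ(\Mod(\fB))^\times$, delooped twice, with reconstruction by relative tensor products) is exactly that intended route. The trouble is that both places where the real content lives are handled by false claims, and for the same reason: they lose the central structure that distinguishes $\sPic^{\br}(\fB)$ from $\sPic(\fB)$.

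First, the identification $\cZ(\Mod(\fB))\simeq\Mod(\cZ(\fB))$ that you plan to use for the main obstacle does not hold. Looping the left side at the unit gives the sylleptic center, $\Omega\cZ(\Mod(\fB))\simeq\cZ_{(2)}(\fB)$: a 1-endomorphism of the unit is an object $A\in\fB$ together with a trivialization $v_{A||-}$ of its double braiding, as in Definition~\ref{def:syllepticcenter}. By contrast, $\Omega\Mod(\cZ(\fB))\simeq\cZ(\fB)$. For $\fB=\tSVect$ the paper itself uses $\Omega\cZ(\mathbf{3SVect})\simeq\tSVect$, while $\cZ(\tSVect)=\mathcal S$ is nondegenerate and strictly larger. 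So ``components are $\cZ(\fB)$-modules with compatible braiding'' is not a reformulation of the problem, and the coherence check you identified as the main obstacle remains open. For $\fB=\tVect$ that coherence is the entire answer: the braided cocycle $(\pi,R,S)$ representing a class in $\rH^5(B^2E;\mathbb{C}^\times)$.

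The same conflation appears in Steps 1 and 3. In Step 1, using $b_{x|y}$ to identify the left and right $\fB$-actions on $\fD_e$ only shows that $\fD_e$ is a one-sided module, i.e.\ an object of $\Mod(\fB)$. That is just the $\sPic(\fB)$-datum of the underlying crossed extension with trivial action. The half-braiding is the further identification of the two one-sided structures built from $b_{x|y}$ and from $b_{y|x}^{-1}$. Equivalently, it is the monodromy $b_{y|x}\circ b_{x|y}$, viewed as a module braiding, together with its coherence 2-cells.

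In Step 3, $B^2\sPic^{\br}(\fB)$ is not a delooping of $B\sPic(\fB)$. Its loop space is $B\sPic^{\br}(\fB)$, and the forgetful map $\sPic^{\br}(\fB)\to\sPic(\fB)$ is not an equivalence. For $\fB=\tSVect$, Lemma~\ref{eq:piPicbr} gives $\pi_0\sPic^{\br}=\Z_2$, while invertible $\tSVect$-module 2-categories are trivial by the cited \cite[Lemma 3.17]{Decoppet:2025eic}. So the nontrivial class is the unit module with a nontrivial half-braiding, which is invisible to $\sPic$. Consequently the mutual-inverse step cannot be read off from the pullback \eqref{eq:DavydovNikshychpullback}.

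What is needed, as in Davydov--Nikshych, is the following:
\begin{enumerate}
\item a concrete model of $\cZ(\Mod(\fB))$ as $\fB$-module 2-categories equipped with a module braiding (equivalently, a trivialization of the braided autoequivalence comparing the two one-sided bimodule structures), whose looping is $\cZ_{(2)}(\fB)$;
\item a verification that the monodromy of $\fD$ gives each $\fD_e$ this structure, compatibly with $\boxtimes_\fB$ and with $b_{\fD_e|\fD_{e'}}$;
\item a direct check that this construction and the relative-tensor-product reconstruction are inverse equivalences of 3-groupoids.
\end{enumerate}
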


The final gadget that we introduce is the 3-groupoid of braided autoequivalences of a braided fusion 2-category $\fB$, denoted $\sAut^{\br}(\fB)$. It is important to note that there is a map $\sPic(\fB) \rightarrow \sAut^{\br}(\fB)$ but it is not an equivalence on low degree homotopy groups\footnote{One can contrast this with the fact that if $\cC$ is a fusion 1-categories, then $\sPic(\cZ(\cC)) \rightarrow \sAut^{\br}(\cZ(\cC))$ is an equivalence.}. In particular, the group $\pi_0\sAut^{br}(\fB)$ is the group of braided autoequivalences $F: \fB \to \fB$ which preserve the braiding on objects, and is invertible up to braided monoidal natural isomorphism e.g. in (2+1)d, this captures all the possible permutation of anyons which preserve the modular data. The higher homotopy groups of $\sAut^{\br}(\fB)$ capture the coherent higher equivalences between braided autoequivalences.
Lastly, as $\sAut^{\br}(\fB)$ is a grouplike space, it too can be delooped.

\subsection{(De-)equivariantization and Nondegenerate Braided Fusion 2-Categories}\label{section:deequiv}
 The (de-)equivariantization procedure induces an equivalence between the 2-category of braided fusion 1-categories containing $\Rep(G)$ and the 2-category of $G$-crossed braided fusion 1-categories \cite{DGNO}. We require an analogous statement for fusion 2-categories, where we now must contend with a $\tRep(G)$ action. Because $\Rep(G)$ is symmetric monoidal, we can always take iterated modules, thus we can view $\Mod(\Rep(G)):= \tRep(G)$ as the condensation surfaces formed by the line operators of the theory. The equivalence that the 2-categorical (de-)equivariantization implements is due to the fact that a 0-form $G$-symmetry in (3+1)D is dual to a categorical 2-form symmetry, for which we condensation complete to obtain all the surface operators given by $\tRep(G)$\footnote{Strictly speaking, (de-)equivariantization differs from gauging in that it does not add twisted sectors. For a strong symmetry $N$, these are supplied by the faithfully $N$-graded extension $\mathcal{C}$, so that gauging $N$ amounts to extension followed by equivariantization. For a weak symmetry $H$, no deconfined twisted sectors arise and equivariantization alone implements classical gauging. We use ``gauging'' in this sense in the main text.}.

\begin{theorem}[{\cite[Theorem 3.2]{Decoppet:2025eic}}]
\label{thm:equiv}
    There is an equivalence of 3-categories between braided fusion 2-categories containing $\tRep(G)$ fully faithfully and $G$-crossed braided fusion 2-categories whose
trivially graded component is nondegenerate.
\end{theorem}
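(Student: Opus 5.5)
Theorem~\ref{thm:equiv} is a 2-categorical version of the Deligne--Müger--DGNO correspondence, so I would prove it by building the two functors (de-equivariantization and equivariantization) and showing they are mutually inverse. Given a braided fusion 2-category $\fD$ with a braided fully faithful inclusion $\tRep(G)\hookrightarrow\fD$, I form the algebra object $A=\mathrm{Fun}(G)$ (the regular algebra, a commutative separable algebra in $\tRep(G)\subset\fD$, i.e. the condensable surface realizing the fiber functor of $\Rep(G)=\Omega\tRep(G)$). I set $\fD_G:=\Mod_{\fD}(A)$.
\begin{itemize}
\item Since $\tRep(G)$ need not lie in the sylleptic center, $A$ is only commutative up to the braiding with $\fD$. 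Hence $\fD_G$ is monoidal under $\otimes_A$, but only $G$-crossed braided rather than braided.
\item The grading is defined by the monodromy: a module $M$ lies in degree $g$ if the double braiding of $A$ with $M$ acts on the free module through $g\in G=\mathrm{Aut}(\mathrm{Fun}(G))$.
\item $G$ acts on $\fD_G$ through its action on $A$.
\item The degree-$e$ part consists of local modules. It is nondegenerate because $\tRep(G)$ embeds fully faithfully: the only transparent local modules would come from $\Omega$ of the sylleptic center meeting $\Rep(G)$, and these become trivial after condensing $A$.
\end{itemize}

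Conversely, given a $G$-crossed braided fusion 2-category $\fC=\boxplus_g\fC_g$ with $\fC_e$ nondegenerate, I take the homotopy fixed points $\fC^G$. The crossed braiding, composed with the equivariant structures, descends to a genuine braiding on $\fC^G$. The trivial objects equipped with equivariant structures give $\tRep(G)=\tVect^{BG}$, embedded fully faithfully because $\Omega(\fC^G)\supset\Rep(G)$. That these constructions are inverse follows from the standard monadicity statements:
\begin{itemize}
\item $\Mod_{\fC^G}(\mathrm{Fun}(G))\simeq\fC$, via Galois descent for the free-forget adjunction along $\fC\to\fC^G$;
\item $(\Mod_\fD A)^G\simeq\fD$, since $A$ is $G$-Galois.
\end{itemize}
Both use Cauchy completeness, so that condensation monads split. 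Functoriality in 1-, 2- and 3-morphisms then gives the equivalence of 3-categories.

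The main obstacle is the coherence bookkeeping. One has to check that the $G$-crossed braiding, including its hexagonators $R,S$ and their crossed analogues, transfers along (de-)equivariantization together with all higher modifications, and that the correspondence holds at the level of 3-categories rather than just on objects. I would avoid explicit cocycle manipulation by working in the Morita/module picture. Concretely, I would realize both sides as $E_1$-algebras over the $\tRep(G)$-module 3-category $\Mod(\tRep(G))\simeq$ 3-representations, and identify a $G$-crossed braided structure with an $E_2$-structure relative to $\tVect$ equivariantly, i.e. an algebra in $\cZ(\Mod\fC)$ over $BG$. The equivalence would then reduce to 3-categorical Galois descent along $\tVect\to\tRep(G)$. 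I expect the nondegeneracy clause to be the subtle point, and I would handle it by computing $\cZ_{(2)}$ of both sides with a relative-center argument.
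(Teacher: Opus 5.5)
\textbf{The false step.} The gap is in your fourth bullet. There you assert that the degree-$e$ part of $\Mod_{\fD}(A)$ is nondegenerate simply because $\tRep(G)\hookrightarrow\fD$ is fully faithful. This is false. Condensing $A$ only trivializes what comes from $\tRep(G)$. Any transparent object of $\fD$ outside $\tRep(G)$, such as a transparent surface or a transparent fermion line, passes unchanged into the local modules.

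Two counterexamples:
\begin{itemize}
\item Take $G$ trivial and $\fD=\tVect_{\Z_2}$, or $\tVect^\beta_E$ with $E\neq 0$, or $\tSVect$. All of these are degenerate, as the paper itself stresses.
\item Take $\fD=\tRep(G)\boxtimes\fB$ with $\fB$ degenerate. It de-equivariantizes to $\fB$ sitting entirely in degree $e$.
\end{itemize}
A symptom of the problem is that your equivariantization direction never uses the hypothesis on $\fC_e$. Your Galois-descent skeleton is the right shape. But once the coherence is done, what it proves is the DGNO-type equivalence between \emph{all} braided fusion $2$-categories containing $\tRep(G)$ and \emph{all} $G$-crossed braided fusion $2$-categories. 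With the nondegeneracy clause imposed on one side only, the correspondence you build cannot be an equivalence, already for $G$ trivial.

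\textbf{The missing idea.} What is missing is the nondegeneracy criterion. This is precisely what the paper's sketch addresses, and it is the form in which the paper uses the theorem in the main text.
\begin{itemize}
\item Suppose the grading of a $G$-crossed braided $\fC$ is supported on $N\trianglelefteq G$. A line $V\in\Rep(G)=\Omega\tRep(G)$ lies in $\Omega\cZ_{(2)}(\fC^G)$ exactly when $N$ acts trivially on $V$, because its monodromy around a $g$-graded surface is the action of $g$. This gives $\Rep(G/N)\hookrightarrow\Omega\cZ_{(2)}(\fC^G)$.
\item For instance, $\fC=\tVect$ concentrated in degree $e$ has nondegenerate $\fC_e$, yet $\fC^G=\tRep(G)$ is maximally degenerate. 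This is the weak-gauging mechanism of the main text.
\item Hence nondegeneracy of $\fC^G$ forces the grading to be faithful.
\item For a faithful grading, $\cZ_{(2)}(\fC^G)\simeq\cZ_{(2)}(\fC_e)$, so $\fC^G$ is nondegenerate if and only if $\fC_e$ is.
\end{itemize}
The correct matching is therefore nondegenerate $\fD\supset\tRep(G)$ on one side, and faithfully graded $\fC$ with nondegenerate $\fC_e$ on the other. This is the paper's subsequent ``immediate fact.'' Your planned relative-center computation has to reproduce this argument. In particular, faithfulness of the grading, which your proposal never mentions, must enter it essentially.
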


Let us explain the idea of this theorem. Suppose that $\fB$ is a braided fusion 2-category representing an imTO, and we equip it with a $G$-symmetry i.e. a $G$-grading. We would like to show that gauging $G$ on $\fB$ (equivariantizing by $G$ on $\fB$) produces a  nondegenerate category if and only if the $G$-grading is faithful and if $\fB_e$, the trivially graded component, is non-degenerate.   Let $N\trianglelefteq G$ be the normal subgroup of $G$ on which the grading of $\mathfrak{B}$ is supported. We consider this as the strong $N$-symmetry. Gauging $G$ produces a dual $\tRep(G)$ symmetry enjoyed by $\fB^G$. From the definition of a $G$-crossed braiding, a line operator in $\mathbf{Rep}(G)=\Omega\mathbf{2Rep}(G)$, can be lifted to $\mathcal{Z}_{(2)}(\mathfrak{B}^G)$ if and only if the subgroup $N$ acts trivially on it. This implies there is an embedding $\mathbf{Rep}(G/N)\hookrightarrow \Omega\mathcal{Z}_{(2)}(\mathfrak{B}^G)$. Let us now consider when the $G$ grading on $\mathfrak{B}$ is faithful, that is, $G=N$ is the entire strong symmetry. But, under this assumption, we have $\mathcal{Z}_{(2)}(\mathfrak{B}^G)\simeq \mathcal{Z}_{(2)}(\mathfrak{B}_e)$. Hence if $\fB_e$ is also nondegenerate, then we have proven the theorem.

As an immediate fact that can be derived from the above theorem is that there is an equivalence of 3-categories between nondegenerate braided fusion
2-categories containing $\tRep(G)$ and \textit{faithfully graded} $G$-crossed braided fusion 2-categories whose trivially graded component is nondegenerate. Hence, (de-)equivariantization preserves nondegeneracy, provided the symmetry acts non-anomalously. 

With these facts in place, we present the classification of nondegenerate bosonic braided fusion 2-categories. Such categories, which correspond to pure-state (3+1)D TOs where all the point-like excitations are bosonic, are classified up to gapped boundary by a finite group $G$ and a class $\pi \in \rH^4(B G; \mathbb{C}^\times)$~\cite{PhysRevX.8.021074}. In particular they are all realized as Drinfeld centers $\cZ(\tVect^\pi_G)$, i.e. finite $G$-gauge theories with Dijkgraaf-Witten action presented by $\pi$.

\subsection{Topological Order with Emergent Fermions}\label{subsection:fermionicbackground}
We now introduce the mathematical tools needed to classify mixed state (3+1)D TOs with emergent fermions. An essential tool that replaces the classification data used in the case where the spectrum contains only bosons is a generalized cohomology theory called \textit{supercohomology} \cite{Wang:2017moj}, which we denote by $\SH^*$. We hence review supercohomology and its twisted cousins which are necessary for explaining the data of mTOs with emergent fermions.

Denote the spectrum representing supercohomology by $\SH$, whose
homotopy groups are given by
\begin{equation}
    \pi_{-2}\SH = \Z_2 , \quad  \pi_{-1}\SH = \Z_2, \quad  \pi_{0}\SH = \mathbb C^\times\,,
\end{equation}
with Postnikov extension data given by
\begin{equation}\label{eq:postnikov}
    \Sq^2:\pi_{-2}\SH \rightarrow \pi_{-1}\SH, \quad (-1)^{\Sq^2}:\pi_{-1}\SH \rightarrow \pi_{0}\SH.
\end{equation}
The above homotopy groups suggest that supercohomology is inherently tied to the space of invertible fermionic gapped phases in spacetime dimensions 2 and below. Indeed, in dimension (0+1)D, the only nontrivial  invertible fermionic phase is the fermion, and in dimension (1+1)D it is the Majorana chain. There is another way to realize supercohomology that is purely  categorical. Namely, it is given as a shift of the homotopy groups of the 2-groupoid 
\begin{equation}\label{eq:2SVectgroups}
    (\tSVect)^\times = \Z_2\boldsymbol{\cdot}B\Z_2\boldsymbol{\cdot}B^2 \mathbb{C}^\times.
\end{equation}
Physically, the $B\Z_2$ represents the vacuum and the emergent fermion line $f$, and the $\Z_2$ represents the condensation of $f$ on a surface.
A natural question one might ask is why such a generalized cohomology theory is necessary for describing TQFTs in (3+1)D. The reason for this is because supercohomology is indispensable for understanding the structure of fusion 2-categories which support emergent fermions, see e.g. the classification of fusion 2-categories \cite{F2C,Dec24,TY25}, and hence for classifying TQFT in (3+1)D as well as constructing anomalous TQFT in (3+1)D \cite{Debray:2025kfg}.

Supercohomology is  used in the parametrization of Theorem \ref{thm:mainB}, in its twisted form. This twisting arises because $(\tSVect)^\times$ has a nontrivial space of automorphisms given by 
$$\sAut^{\mathrm{br}}(\SVect) \simeq B \Z_2\,,$$ with nontrivial braided natural isomorphism given by the fermion parity operator $(-1)^F$. 
\begin{defn}
    Given a space $X$ equipped with a map $\kappa:X\to  B^2\Z_2$, or, equivalently, a $B\Z_2$-equivariant structure, the \textit{$\kappa$-twisted $n$-th supercohomology group} of $X$ is the group $\SH^{n+\kappa}(X)$ of homotopy classes of $B\Z_2$-equivariant maps from $X\rightarrow B^{n-2}\tSVect^\times$. In other words, maps over $B \Z_2$ of the form
\begin{equation}
    \begin{tikzcd}
        X \arrow[rr] \arrow[rd,"\kappa"]& & B^{n-2}\tSVect^\times\arrow[dl]\\
 &B^2 \Z_2\,. &  
    \end{tikzcd}
\end{equation}
\end{defn}

\end{document}